\documentclass[referee,pdflatex,sn-mathphys-ay]{sn-jnl}

\usepackage{graphicx}%
\usepackage{multirow}%
\usepackage{amsmath,amssymb,amsfonts,bbm}%
\usepackage{amsthm}%
\usepackage{mathrsfs}%
\usepackage[title]{appendix}%
\usepackage{xcolor}%
\usepackage{textcomp}%
\usepackage{manyfoot}%
\usepackage{booktabs}%
\usepackage{algorithm}%
\usepackage{algorithmicx}%
\usepackage{algpseudocode}%
\usepackage{mathtools}
\usepackage[capitalize,nameinlink]{cleveref}
\usepackage{subfig,bm,todonotes}

\allowdisplaybreaks

\newcommand{\R}{\mathbb{R}}
\renewcommand{\P}{\mathbb{P}}
\newcommand{\E}{\mathbb{E}}

\theoremstyle{thmstyleone}%
\newtheorem{theorem}{Theorem}
\newtheorem{lemma}[theorem]{Lemma}

\newtheorem{counterexample}[theorem]{Counterexample}

\theoremstyle{definition}
\newtheorem{defn}[theorem]{Definition}

\usepackage{comment}

\graphicspath{{Figures/}}

\begin{document}

%%%%%%%%%%%%%%%%%%%%%%%%%%%%%%%%%%%%%%%%%%%%%%%%%%%%%%%%%%%%%%%%%%%%%%%%%%%%%%

\title[When lookout sees crackle]{When lookout sees crackle: Anomaly~detection~via~kernel~density~estimation}

\author[1]{\fnm{Rob J} \sur{Hyndman}}
\author*[2]{\fnm{Sevvandi} \sur{Kandanaarachchi}}\email{sevvandi.kandanaarachchi@csiro.au}
\author[3]{\fnm{Katharine} \sur{Turner}}
\affil[1]{\orgname{Monash University}, \country{Australia}}
\affil[2]{\orgname{CSIRO}, \country{Australia}}
\affil[3]{\orgname{Australian National University}, \country{Australia}}

\abstract{
  We present an updated version of \textit{lookout} -- an algorithm for detecting anomalies using kernel density estimates with bandwidth based on Rips death diameters -- with theoretical guarantees. The kernel density estimator for updated \textit{lookout} is shown to be consistent, and the proposed multivariate scaling is robust and efficient. We show our updated algorithm performs better than the previous version on diverse examples.
  %TBC. 200 or fewer words.
}

\keywords{
  Bandwidth selection,
  Extreme value theory,
  Generalized Pareto distribution,
  Outlier detection,
  Persistent homology,
  Topological data analysis
}

\maketitle

% -----------------------------------------------------------
\section{Introduction} \label{sec:intro}
% ------------------------------------------------------------

Anomaly detection is used in many application domains such as credit card fraud and scam detection \citep{kolupuri2025scams}.  \cite{lookout} proposed the \textit{lookout} algorithm for anomaly detection using a kernel density estimate with bandwidth based on persistence homology. Anomalies are identified by fitting a Generalized Pareto Distribution (GPD) to the negative logarithm of the smallest leave-one-out kernel density estimates, and declaring points with low GPD probabilities as anomalies.

In this paper, we explore some of the theoretical properties of the lookout algorithm, particularly the choice of bandwidth based on the Rips death diameters, and the consistency of the kernel density estimates produced by this choice. This leads us to propose a modified lookout algorithm using a different bandwidth, although still based on the Rips death diameters. The modified algorithm improves the performance and robustness of the original algorithm, under a broader range of conditions.

First we introduce some background material, before describing both the original lookout algorithm, and our proposed modifications.

\subsection{Anomalies and surprisals}

There is no broadly accepted definition for an anomaly. \citet{hawkins1980identification} defines an anomaly or outlier as \textit{an observation which deviates so much from other observations as to arouse suspicion it was generated by a different mechanism.} \citet{Barnett1978} define an anomaly or outlier as \textit{an observation (or a subset of observations) which appears to be inconsistent with the remainder of that set of data.} Both these definitions agree that anomalies are rare and different from other observations. Usually that is applied by fitting some (conditional) probability distribution to the set of observations, and considering whether each observation can be considered a likely random draw from the probability distribution. We formalize this in the following definition.

\begin{defn}[Anomalies]
  Given a set of observations $\bm{y}_{i} \in \R^m$ for $i \in \{1, \dots, n\}$, and a probability density function $f$, we define observation $\bm{y}_i$ as anomalous if
  \begin{equation}
    \text{Pr}(f(Y) \le f(\bm{y}_{i})) < \alpha,
  \end{equation}
  where $Y$ is a random variable with density $f$, and $\alpha>0$ is a chosen threshold.
\end{defn}

\begin{defn}[Surprisals]
  Given a probability density function $f$, the \emph{surprisal} of an observation $\bm{y}$ drawn from the density $f$ is defined as $-\log f(\bm{y})$.
\end{defn}
These are better known as  ``log scores'' in statistics \citep{Good1952-zw}, but we prefer the term ``surprisals'' from information theory \citep{Stone2022}, as it provides a measure of how surprising an observation is, given the probability density function $f$. Specifically, it takes large values for observations that are unlikely under $f$.

Let $S = -\log f(\bm{Y})$ be the random variable representing the surprisal of a random vector $\bm{Y}$ drawn from $f$, and let $G(s) = \text{Pr}(S\le s)$ be its distribution function. Then the surprisal probability of observation $\bm{y}_i$ is given by
$$p_i = 1 - G(s_i^-),$$
where $s_i = -\log f (\bm{y}_i)$, and $G(s_i^-) = \lim_{t \uparrow s_i} G(t)$ is the left limit of $G$ at $s_i$. An observation is an anomaly if and only if $p_i < \alpha$.

\subsection{Extreme Value Theory}

Consider $n$ independent and identically distributed random variables $X_1, \dots, X_n$ having a distribution function $G$, with their maximum $M_n = \max \{X_1, \dots, X_n\} $. Extreme Value Theory (EVT) analyses and characterizes the behavior of $M_n$. One of the fundamental results of EVT is the Fisher-Tippet-Gnedenko Theorem. We restate a version of it from \citet[][Theorem 3.1.1]{coles2001introduction}.

% \begin{equation}\label{eq:evt1}
%     M_n = \max \{X_1, \, \ldots, \, X_n\} \, .
% \end{equation}
% If $F$ is known, the distribution of $M_n$ can be derived for any value of $n$ as
% \begin{align}\label{eq:evt2}
%     P\{M_n \leq z \} & = P\{X_1 \leq z, \, \ldots, \, X_n \leq z \} \, ,  \\
%     & = P\{X_1 \leq z\} \times \ldots \times P\{X_n \leq z\}\, , \\
%     & =  \left(F(z)\right)^n \, .
% \end{align}
% However, $F$ is not known in practice. This gap is filled by Extreme Value Theory, which studies approximate families of models for $F^n$ so that extremes can be modeled and uncertainty quantified. Similar to the Central Limit Theorem for means, the Fisher-Tippet-Gnedenko Theorem, also known as the the extreme value theorem or the extremal types theorem states that under certain conditions, a scaled maximum $\frac{M_n - a_n}{b_n}$, if $a_n$ and $b_n$ exist, have certain limit distributions.

\begin{theorem}[Fisher-Tippett-Gnedenko] \label{thm:FisherTippett}
  If there exist sequences of constants $\{a_n>0\}$ and $\{b_n\}$ such that
  $$ P\left\{ (M_n - b_n)/a_n \leq z \right\} \rightarrow H(z) \quad \text{as} \quad n \to \infty,
  $$
  for a non-degenerate distribution function $H$, then $H$ is a member of the Generalized Extreme Value (GEV) family
  \begin{equation}\label{eq:EVT4}
    H(z) = \exp\left\{ -\left[ 1 + \xi\left(\frac{z - \mu}{\sigma} \right)\right]^{-1/\xi} \right\}\, ,
  \end{equation}
  where the domain of the function is $\{z: 1 + \xi (z - \mu)/\sigma >0 \}$ and the parameters $\mu, \xi \in \R$ and $\sigma > 0$.
\end{theorem}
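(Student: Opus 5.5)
The plan is the classical max-stability route. Write $G_n$ for the distribution function of $(M_n-b_n)/a_n$, so that $G_n(z)=G(a_nz+b_n)^n \to H(z)$ at every continuity point of $H$.

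\textbf{Step 1: $H$ is max-stable.} For each fixed integer $k\ge 1$, look at the subsequence $M_{nk}$. On one hand,
$$G(a_{nk}z+b_{nk})^{nk}\to H(z).$$
On the other hand, taking the $k$-th power of the original convergence gives
$$G(a_nz+b_n)^{nk}\to H(z)^k.$$
So the normalized block maxima $(M_{nk}-b_n)/a_n$ converge to the non-degenerate law $H^k$, while $(M_{nk}-b_{nk})/a_{nk}$ converge to $H$. Khintchine's convergence-of-types theorem then yields constants $\alpha_k>0$ and $\beta_k$ with
$$a_{nk}/a_n\to\alpha_k,\qquad (b_{nk}-b_n)/a_n\to\beta_k,$$
and hence
$$H(z)^k=H(\alpha_k^{-1}(z-\beta_k))\quad\text{for all } z.$$
I will state convergence of types as a standard lemma; its proof rests on quantile-function arguments for weakly convergent sequences.

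\textbf{Step 2: extend to all real $t>0$ and obtain a functional equation.} Put $\phi(z)=-\log(-\log H(z))$ on the set where $0<H<1$. The identity from Step 1 becomes
$$\phi(z)-\log k=\phi\bigl(\alpha_k^{-1}(z-\beta_k)\bigr).$$
Repeating the argument with $\lfloor nt\rfloor$ in place of $nk$ gives, for every real $t>0$, an affine map $z\mapsto A(t)z+B(t)$ with
$$H(z)^t=H\bigl(A(t)z+B(t)\bigr).$$
Composing the cases $t$ and $s$ shows that $A(st)=A(s)A(t)$ and $B(st)=A(s)B(t)+B(s)$. A monotonicity argument, using that $H$ is non-degenerate and nondecreasing, makes $A$ measurable and hence continuous. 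Therefore $A(t)=t^{-\xi}$ for some $\xi\in\R$.

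\textbf{Step 3: solve the functional equation by cases.} Evaluating at $H^{-1}(e^{-1})$ and differentiating the relations in $t$ solves everything explicitly.

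If $\xi=0$, then $B(t)=-\sigma\log t$. This gives $\phi(z)=(z-\mu)/\sigma$, so $H$ is the Gumbel law $\exp\{-e^{-(z-\mu)/\sigma}\}$.

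If $\xi\ne0$, then $B(t)=c(1-t^{-\xi})$ for some constant $c$, and the point $c$ is fixed by every map. Solving gives
$$-\log H(z)=\bigl[1+\xi(z-\mu)/\sigma\bigr]^{-1/\xi}$$
on the half-line where the bracket is positive, with $H=0$ or $H=1$ outside it. The two signs of $\xi$ give the Fréchet and reversed Weibull forms. Together these cases assemble into the GEV family \eqref{eq:EVT4}.

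\textbf{Main obstacle.} I expect the hardest step to be Step 2. Passing from integer $k$ to real $t$ requires controlling $a_{\lfloor nt\rfloor}/a_n$. Proving that $A$ is regular enough needs care, because only weak convergence at continuity points is available; I would handle this by monotonicity plus continuity of non-degenerate max-stable laws.

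Since the paper restates this theorem from \citet[][Theorem 3.1.1]{coles2001introduction}, in context I would simply cite that reference (or Leadbetter et al.) rather than reproduce the full argument.
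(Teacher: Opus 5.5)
The paper gives no proof of this theorem. It is background, restated from \citet[][Theorem 3.1.1]{coles2001introduction} and used only to justify constraining the GPD shape to $\xi\le 0$. Your closing decision to cite it therefore matches the paper, and your sketch adds a self-contained argument the paper does not attempt. That sketch is the classical max-stability/convergence-of-types proof (as in Leadbetter, Lindgren and Rootz\'en) and is correct, but three spots should be tightened if you write it out.

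First, the passage to real $t$ is not the hard step. Since $G^{\lfloor nt\rfloor}(a_nz+b_n)=\bigl(G^n(a_nz+b_n)\bigr)^{\lfloor nt\rfloor/n}\to H(z)^t$, convergence of types applies verbatim. It also exhibits $A(t)$ and $B(t)$ as pointwise limits of step functions of $t$ built from $a_n,b_n,a_{\lfloor nt\rfloor},b_{\lfloor nt\rfloor}$, so both are measurable directly. The ``monotonicity argument'' you mention is unclear, because monotonicity of $A$ in $t$ is not evident a priori.

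Second, measurability of $B$ is exactly what the $\xi=0$ case needs to solve $B(st)=B(s)+B(t)$; you did not mention it. For $\xi\neq0$, symmetrizing $B(st)=A(s)B(t)+B(s)$ gives $B(t)\bigl(A(s)-1\bigr)=B(s)\bigl(A(t)-1\bigr)$. Hence $B(t)=c(1-t^{-\xi})$ follows algebraically, with no differentiation in $t$.

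Third, $H$ need not be invertible, so do not evaluate at $H^{-1}(e^{-1})$. Instead fix any $z_0$ with $0<H(z_0)<1$ and use $-\log H\bigl(A(t)z_0+B(t)\bigr)=-t\log H(z_0)$. As $t$ ranges over $(0,\infty)$, the argument sweeps out exactly the set where $0<H<1$. This is all of $\R$ when $\xi=0$, and the open half-line from $c$ through $z_0$ otherwise, since $H(c)\in\{0,1\}$. This gives the GEV form, and monotonicity of $H$ forces $\sigma>0$ and the sign of $\xi$.
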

The parameters $\mu$, $\sigma$ and $\xi$ are called the location, scale and shape parameters respectively. The shape parameter $\xi$ determines the behavior of the tails and differentiates between the Gumbel, Fréchet and Weibull distributions: a Fréchet distribution is obtained when $\xi>0$, a Weibull distribution is obtained when $\xi < 0$, and a Gumbel distribution is obtained in the limit as $\xi\rightarrow 0$. Importantly, when $F$ has a finite upper bound, then $\xi < 0$.

A second result from EVT that we will use is Pickands theorem \citep{Pickands1975}, here restated from \citep[][Theorem 4.1]{coles2001introduction}.

\begin{theorem}[Pickands] Let $X_1, \dots, X_n$  be a sequence of independent random variables with a common distribution function $G$, and let
  $M_n = \max \{X_1, \ldots, X_n \}$. Suppose $G$ satisfies Theorem~\ref{thm:FisherTippett}, so that for large $n$, $P\{ M_n \leq z \} \approx H(z)$,
  where
  $$
  H(z) = \exp\left\{ -\left[ 1 + \xi\left(\frac{z - \mu}{\sigma} \right)\right]^{-1/\xi} \right\}\, ,
  $$
  for some $\mu, \xi \in \R$ and $\sigma >0$. Then for large enough $u$,
  \begin{equation}\label{eq:POT4}
    \text{Pr}(X \le u+x \mid X > u) = 1 - \left( 1 + \frac{\xi x}{\sigma_u} \right)^{-1/\xi}\, ,
  \end{equation}
  where $x > 0$, $(1 + \xi x)/\sigma_u >0$, and $\sigma_u = \sigma + \xi(u- \mu)$.
\end{theorem}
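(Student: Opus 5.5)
The plan is to derive the peaks-over-threshold form directly from the GEV approximation for maxima given by Theorem~\ref{thm:FisherTippett}, via the standard link between $G^n$ and the upper tail of $G$. Write $\bar G(z) = 1 - G(z)$. Since the $X_i$ are i.i.d., $P\{M_n \le z\} = G(z)^n$. The hypothesis $G(z)^n \approx H(z)$ for large $n$ then gives, after taking logarithms,
\begin{equation*}
n \log G(z) \approx -\left[ 1 + \xi\left(\frac{z - \mu}{\sigma} \right)\right]^{-1/\xi}.
\end{equation*}

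The first key step is to pass from $\log G$ to the tail probability. For $z$ large, $G(z)$ is close to $1$, so we can use the Taylor expansion $\log G(z) = \log(1 - \bar G(z)) \approx -\bar G(z)$. This yields
\begin{equation*}
\bar G(z) \approx \frac{1}{n}\left[ 1 + \xi\left(\frac{z - \mu}{\sigma} \right)\right]^{-1/\xi}
\end{equation*}
for all sufficiently large $z$. In particular the approximation holds at both $z = u$ and $z = u + x$ with $x>0$, once $u$ is large enough.

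The second step is to form the conditional probability. We have $\text{Pr}(X > u + x \mid X > u) = \bar G(u+x)/\bar G(u)$. In this ratio the factor $1/n$ cancels, and we are left with
\begin{equation*}
\left[\frac{1 + \xi (u + x - \mu)/\sigma}{1 + \xi(u-\mu)/\sigma}\right]^{-1/\xi}
= \left[1 + \frac{\xi x}{\sigma + \xi(u-\mu)}\right]^{-1/\xi}
= \left(1 + \frac{\xi x}{\sigma_u}\right)^{-1/\xi}.
\end{equation*}
Taking the complement gives exactly \eqref{eq:POT4}. The support condition $1 + \xi x/\sigma_u > 0$ is inherited directly from the GEV domain condition at $z = u + x$. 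The case $\xi \to 0$ follows by continuity: the ratio becomes $\exp(-x/\sigma)$ with $\sigma_u = \sigma$.

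The main obstacle is rigor. As stated, the theorem is an asymptotic approximation, and the argument above uses ``$\approx$'' at two points. The first is the replacement of $\log(1-\bar G)$ by $-\bar G$. The second is the use of the fixed-$n$ GEV approximation at arbitrary large $z$.

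To make this precise, I would instead take as hypothesis that $G$ is in the max-domain of attraction of $H$, meaning $n\,\bar G(a_n z + b_n) \to -\log H(z)$. I would then invoke the Balkema--de Haan characterization of this condition. That characterization shows that $\sup_x \bigl|\text{Pr}(X - u \le x \mid X>u) - \text{GPD}_{\xi,\sigma_u}(x)\bigr| \to 0$ as $u$ tends to the upper endpoint of $G$. Under this reading, ``for large enough $u$'' is interpreted as this limit, and the equality in \eqref{eq:POT4} is understood in the same approximate sense as in the hypothesis.
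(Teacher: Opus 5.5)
The paper gives no proof of this statement; it is restated from \citet[Theorem~4.1]{coles2001introduction}. The first part of your proposal is exactly the outline justification given there: take logarithms of $G^n\approx H$, replace $\log G$ by $-\bar G$, and divide the tail at $u+x$ by the tail at $u$ so that $1/n$ cancels. As a heuristic it matches the source the paper relies on.

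Your rigorous version, however, claims too much. The Balkema--de Haan--Pickands theorem gives $\sup_x\bigl|\Pr(X-u\le x\mid X>u)-\mathrm{GPD}_{\xi,\sigma(u)}(x)\bigr|\to 0$ only for \emph{some} positive scale function $\sigma(u)$. It does not deliver the linear scale $\sigma_u=\sigma+\xi(u-\mu)$ with the fixed GEV parameters, and with that scale the conclusion is false in general.

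For a standard Gaussian, $\xi=0$, so the linear form is the constant $\sigma$. But $\Pr(X>u+x\mid X>u)\approx e^{-ux}$, so the excess scale is of order $1/u\to0$, and the supremum distance tends to $1$. Similarly, for $\xi<0$ the linear scale is right only if its zero $\mu-\sigma/\xi$ equals the upper endpoint of $G$.

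This is your second ``$\approx$'' made concrete: a fixed-$n$ GEV approximation controls $\bar G$ only on the scale $b_n+O(a_n)$, so $u$ must move with $n$. Once it does, no deep theorem is needed. The domain-of-attraction condition, standardized as $n\bar G(a_nz+b_n)\to(1+\xi z)^{-1/\xi}$, already absorbs your first ``$\approx$'', because $\bar G(a_nz+b_n)\to0$ and $-\log(1-t)\sim t$. Take the ratio at $z+y$ and $z$, with $1+\xi z>0$ and $y>0$. For $u_n=a_nz+b_n$ this gives
\[
\Pr(X>u_n+a_ny\mid X>u_n)\to\Bigl(1+\frac{\xi y}{1+\xi z}\Bigr)^{-1/\xi}=\Bigl(1+\frac{\xi x}{a_n+\xi(u_n-b_n)}\Bigr)^{-1/\xi},\qquad x=a_ny.
\]
This is \eqref{eq:POT4} with $(\mu,\sigma)=(b_n,a_n)$, the normalizing constants that act as the GEV parameters at sample size $n$. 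That is the precise sense in which the displayed formula, with its specific $\sigma_u$, holds. Balkema--de Haan adds uniformity in $x$ and a statement in continuous $u$, but leaves $\sigma(u)$ unspecified.
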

The family of distributions defined by equation~\eqref{eq:POT4} is called the \textbf{Generalized Pareto Distribution} (GPD). The shape parameter $\xi$ is the same in both GPD and GEV distributions. Furthermore, the GEV parameters can be recovered from the GPD formulation by estimating a GPD on all observations above some threshold.

\subsection{Kernel Density Estimation}

Given a set of observations $\bm{y}_{i} \in \R^m$ for $i \in \{1, \dots, n\}$, a kernel density estimate is defined as \citep{chacon2018multivariate}
\begin{equation}\label{eq:mkde}
  \hat{f_n}(\bm{y}) = \frac{1}{n} \sum_{i=1}^n |\bm{H}|^{-1/2} K\big(\bm{H}^{-1/2}(\bm{y} - \bm{y}_i)\big),
\end{equation}
where $K$ is a square-integrable spherically-symmetric function, bounded below by 0, with a finite second-order moment and unit integral, and $\bm{H}$ is a symmetric $m\times m$ positive-definite matrix.

The expected $L_2$ distance between the estimator $\hat{f}_n$ and the true density $f$ is given by the mean integrated squared error (MISE) defined as
\begin{equation}\label{eq:mise}
  \text{MISE}(\hat{f}_n) = \text{E}\left[\int \left(\hat{f_n}(\bm{u}) - f(\bm{u})\right)^2 d\bm{u}\right].
\end{equation}
We say that $\hat{f}$ is a \emph{consistent estimator} of $f$ if $\text{MISE}(\hat{f}_n)\rightarrow 0$ as $n\rightarrow\infty$.

We will need some conditions on the density function, the kernel and the bandwidth matrix to ensure consistency of the kernel density estimator. These are given below.

\begin{description}
  \item[A1] The density function $f$ is square integrable and twice differentiable, with all of its second-order partial derivatives bounded, continuous and square integrable.
  \item[A2] The kernel $K$ is a square-integrable, spherically symmetric function on $\R^m$ that satisfies\break $\int K(\bm{u})d\bm{u} = 1$, $\int \bm{u} K(\bm{u})d\bm{u} = \bm{0}$ and $\int \bm{u}\bm{u}^T K(\bm{u})d\bm{u} = \mu_2(K)\bm{I}_m$ for some $\mu_2(K) > 0$.
  \item[A3] The bandwidth matrix $\bm{H}$ is a sequence of symmetric positive-definite matrices such that\break $\bm{H}\rightarrow \bm{O}$ and $n|\bm{H}|^{1/2} \rightarrow \infty$ as $n\rightarrow\infty$.
\end{description}

The following result is provided by \citet[][p31--37]{chacon2018multivariate}.

\begin{theorem}[Consistency] Let $\bm{y}_1,\dots,\bm{y}_n$ be a sequence of independent random variables from a distribution on $\R^m$ that satisfies A1. Then under conditions A2 and A3,
  $$
  \lim_{n\rightarrow\infty} \text{E}\left[\int \left(\hat{f_n}(\bm{u}) - f(\bm{u})\right)^2 d\bm{u}\right] = 0.
  $$
\end{theorem}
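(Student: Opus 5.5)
The plan is the standard bias--variance decomposition of the MISE. Writing $K_{\bm{H}}(\bm{u}) = |\bm{H}|^{-1/2}K(\bm{H}^{-1/2}\bm{u})$, Fubini gives
\begin{equation*}
  \text{MISE}(\hat f_n) = \int \big(\text{E}\hat f_n(\bm{u}) - f(\bm{u})\big)^2 d\bm{u} + \int \text{Var}\,\hat f_n(\bm{u})\, d\bm{u},
\end{equation*}
the integrated squared bias plus the integrated variance. We have $\text{E}\hat f_n = K_{\bm{H}} * f$ because the $\bm{y}_i$ are identically distributed. Independence gives $\text{Var}\,\hat f_n(\bm{u}) = n^{-1}\text{Var}\,K_{\bm{H}}(\bm{u}-\bm{y}_1)$. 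We then show that each term tends to zero under A1--A3.

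\textbf{Variance term.} We bound $\text{Var}\,K_{\bm{H}}(\bm{u}-\bm{y}_1) \le \text{E}[K_{\bm{H}}(\bm{u}-\bm{y}_1)^2] = (K_{\bm{H}}^2 * f)(\bm{u})$. Integrating over $\bm{u}$ and using $\int f = 1$ gives $\int K_{\bm{H}}^2 = |\bm{H}|^{-1/2}R(K)$, where $R(K) = \int K^2 < \infty$ by A2. Hence the integrated variance is at most $R(K)/(n|\bm{H}|^{1/2})$, which tends to $0$ by the second half of A3. This step is routine.

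\textbf{Bias term.} This is the main obstacle. Substituting $\bm{v} = \bm{H}^{-1/2}(\bm{u}-\bm{y})$ gives
\begin{equation*}
  \text{E}\hat f_n(\bm{u}) - f(\bm{u}) = \int K(\bm{v})\big[f(\bm{u} - \bm{H}^{1/2}\bm{v}) - f(\bm{u})\big]d\bm{v}.
\end{equation*}
Take a second-order Taylor expansion with integral remainder:
\begin{equation*}
  f(\bm{u}-\bm{w}) - f(\bm{u}) = -\bm{w}^T\nabla f(\bm{u}) + \int_0^1 (1-t)\,\bm{w}^T \mathsf{D}^2 f(\bm{u}-t\bm{w})\,\bm{w}\, dt, \qquad \bm{w} = \bm{H}^{1/2}\bm{v}.
\end{equation*}
The linear term integrates to zero by the first-moment condition in A2.

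The remainder requires care. We would not take a pointwise bound followed by integration, because boundedness of the Hessian alone is not integrable over $\R^m$. Instead, we apply Minkowski's integral inequality to the $L_2(d\bm{u})$ norm. This bounds the $L_2$ norm of the bias by
\begin{equation*}
  \int\!\!\int_0^1 |K(\bm{v})|\,(1-t)\,\big\|\bm{w}^T \mathsf{D}^2 f(\cdot - t\bm{w})\,\bm{w}\big\|_{L_2}\, dt\, d\bm{v}.
\end{equation*}
Translation invariance of the $L_2$ norm removes the shift $t\bm{w}$. Square integrability of the second partials in A1 then gives a bound of order $\|\bm{H}^{1/2}\bm{v}\|^2 \max_{j,k}\|\partial_{jk}f\|_{L_2}$.

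Integrating against $|K|$ leaves a constant times $\text{tr}(\bm{H})\,\mu_2(K)$. This uses the finite second moment of $K$ from A2 (and $K \ge 0$, as assumed in the definition of the estimator, so that $\int \bm{v}\bm{v}^TK = \mu_2(K)\bm{I}_m$ controls $\int\|\bm{v}\|^2|K|$). Since $\bm{H}\to\bm{O}$ gives $\text{tr}(\bm{H})\to 0$, the integrated squared bias is $O(\text{tr}(\bm{H})^2) \to 0$.

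\textbf{Conclusion.} Combining the two terms gives $\text{MISE}(\hat f_n)\to 0$, which is the statement. The only delicate point is justifying the $L_2$ bound on the remainder via Minkowski's inequality rather than a pointwise bound. As a fallback, since A1 also gives $f\in L_2$, one could argue more softly that $\|K_{\bm{H}}*f - f\|_{L_2}\to 0$ for any $L_2$ function. That route uses continuity of translation in $L_2$ and dominated convergence, with $\bm{H}\to\bm{O}$ concentrating $K_{\bm{H}}$, and it avoids rates altogether.
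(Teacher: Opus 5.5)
Your proof is correct and follows essentially the same bias--variance route the paper relies on. The paper gives no proof of its own; it cites the derivation in Chac\'on and Duong (2018, pp.~31--37), which splits the MISE into integrated variance and integrated squared bias and treats the bias by a second-order Taylor expansion using the moment conditions in A2. The only difference is that you replace the asymptotic AMISE expansion with explicit non-asymptotic bounds, namely $\int \text{Var}\,\hat f_n \le R(K)/(n|\bm{H}|^{1/2})$ and, via Minkowski's integral inequality, $\|K_{\bm{H}}*f - f\|_{L_2} \le \tfrac{m}{2}\,\mu_2(K)\,\text{tr}(\bm{H})\max_{j,k}\|\partial_{jk}f\|_{L_2}$; this is all that consistency needs and avoids having to control the integrated $o(\text{tr}\,\bm{H})$ remainder of a pointwise expansion.
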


We consider matrices of the form $\bm{H} = h_n \bm{I}_m$, where $h_n\in\R$ is a function of the sample $\{\bm{y}_1,\dots,\bm{y}_n\}$. We have $|\bm{H}|=h_n^m$ so $n|\bm{H}|^{1/2} \rightarrow \infty$ precisely when $n(h_n)^{m/2} \to \infty$. This leads to our definition of an admissible bandwidth as these are the bandwidths which will determine consistent estimators.
\begin{defn}[Admissible bandwidth]\label{def:admissible_bandwidth}
  We say that $h_n$ is an \emph{admissible bandwidth} for samples in $\R^m$ if $h_n\rightarrow 0$ and $n(h_n)^{m/2}\rightarrow \infty$ as $n\rightarrow\infty$ almost surely.
\end{defn}

\subsection{The lookout algorithm for anomaly detection}

The \textit{lookout} algorithm of \cite{lookout} has four inputs: the dataset $\{\bm{y}_1,\dots,\bm{y}_n\}$, where each $\bm{y}_i \in \R^m$; a parameter $\beta \in (0,1)$ indicating the proportion of points to be used for fitting the Generalized Pareto Distribution (GPD); a parameter $\alpha \in (0,1)$ signifying the cut-off threshold for determining an anomaly; and a logical variable \textit{unitize} indicating whether to scale the data. We briefly outline the steps below.

\begin{algorithm}
  \caption{Lookout algorithm}\label{alg:lookout}
  \begin{algorithmic}[1]
    \Require Data $\{\bm{y}_1, \dots, \bm{y}_n\}$, proportion $\beta \in (0,1)$, threshold $\alpha \in (0,1)$, \emph{unitize} $\in \{\text{True}, \text{False}\}$.
    \State If \textit{unitize}, scale the data to lie in the unit cube in $\R^m$.
    \State Compute the persistence homology barcode of the (possibly scaled) data cloud for dimension zero using the Vietoris-Rips diameter \citep{ghrist2008barcodes}.
    \State From the barcode, obtain the ordered finite death diameters $\{d_i\}_{i = 1}^{n-1}$ for connected components, and their successive differences $\Delta d_i = d_{i+1} - d_i$.
    \State Let $d_* = d_{i_*}$ where $i_* = \text{arg\! max} \{\Delta d_i \}_{i=1}^{n-2}$; that is, $d_*$ is the diameter corresponding to the lower point of the maximum $\Delta d_i$.
    \State Compute the kernel density estimates at the observations: $f_i = \hat{f}(\bm{y}_i)$, $i=1,\dots,n$, where $\hat{f}$ is given by \eqref{eq:mkde}, $\bm{H} = h_n\bm{I}_m$, and $h_n = d_*$.
    \State Compute the leave-one-out kde values $f_{-i} = \frac{1}{n-1} \big( n f_i - \bm{H}^{-1/2} K(\bm{0}) \big)$ for $i=1,\dots,n$.
    \State Fit a Generalized Pareto Distribution (GPD) to the largest $1-\beta$ of the surprisals $\{-\log f_i\}_{i=1}^n$, obtaining estimates of the location, scale and shape parameters $(\mu, \sigma, \xi)$.
    \State Calculate the probability of each observation by applying the estimated GDP to the leave-one-out kde values: $p_i = (1-\beta)P(-\log f_{-i} \mid \hat{\mu}, \hat{\sigma}, \hat{\xi})$.
    \State If $p_i < \alpha$, declare $\bm{y}_i$ anomalous.
  \end{algorithmic}
\end{algorithm}

The lookout algorithm provides density-based anomaly detection, where anomalies are identified as points with low density in the space of the data. If there are no anomalies in the data, the algorithm should identify $\alpha$ of the data as false positive ``anomalies''.

While \citet{lookout} demonstrated excellent results in applying their algorithm to a variety of simulated and real datasets, we see several potential areas for improvement that we aim to address in this paper. First, elements of the algorithm are not robust to outliers, including the min-max scaling, and the use of the largest gap in Rips death diameters. Second, the diagonal bandwidth matrix $\bm{H}$ does not account for correlations between variables, which can lead to suboptimal performance if the data have strong correlations. Third, it is not obvious that this bandwidth matrix choice will lead to a consistent kernel density estimator, particularly for distributions with heavy tails. Finally, the GPD estimation made minimal assumptions about the distribution of the surprisal values, but we know that these are bounded above and below, and that fact can be used to improve the estimated GPD.

In the following section, we will explore the conditions under which the kernel density estimator is consistent for $h_n=d_{\omega(n)}$ (the $\omega(n)$ smallest death radius), and the implications for the lookout algorithm. In particular, we indicate it is unlikely the previous choice of $h_n=d_*$ will be consistent for distributions with heavy tails such as a power distribution, although it seems likely that they are satisfied for lighter-tailed distributions such as the Gaussian distribution. We will also show that an alternative choice of bandwidth, based on an upper quantile of the Rips death diameters ($d_{\gamma(n-1)}$), satisfies these conditions for a wide range of distributions. We provide lower and upper bounds on $d_{\gamma(n-1)}$ (with high probability) showing it is approximately of order $n^{1/m}$.

An important mathematical tool for proving when a bandwidth is admissible uses the notion of crackle, introduced in \cite{Crackle}, which describes the persistent homology of noise. They proved asymptotic results for the persistent homology of the union of unit balls centered on iid samples of various distributions. There are qualitative differences between sets of samples taken from the Gaussian distribution (which do not ``crackle'') compared to those from heavy-tailed distributions. We will adapt the methods to this setting where we wish to shrink the radius to $0$, albeit slowly.

%This paper proved asymptotic results about the

\subsection{The modified lookout algorithm}

As a result of these theoretical considerations, we propose three modifications to the lookout algorithm, described in detail in \cref{sec:modified}.

\begin{enumerate}
  \item Rather than use a max-min scaling of the data, we standardize the data using a robust covariance estimator. That is, we scale and rotate the data to ensure that the covariance matrix of the transformed data is approximately the identity matrix. This ensures that the data have zero pairwise correlations, which leads to more efficient kernel density estimates. It also leads to each variable having approximately unit variance, which provides a more robust scaling than min-max scaling.
  \item Rather than using the lower point of the largest gap in Rips death diameters, we use an upper quantile of the Rips death diameters. This avoids the problem where anomalies affect the largest gap in the Rips diameters that cause a persistent homology death. It also ensures the consistency of the kernel density estimates under relatively weak conditions.
  \item The estimated GPD shape parameter is constrained to be non-positive, which improves the stability of the GPD fitting. This is justified by noting that the kernel density estimates are bounded, so the limiting distribution under the Fisher-Tippett-Gnedenko theorem is a Weibull distribution, which has a negative shape parameter, $\xi$.
\end{enumerate}

Note that we do not need to estimate the density of the underlying distribution directly. Instead, we can estimate the density of a transformation of the data, and identify anomalies in the transformed space. Hence, there is no need to undo the scaling or rotation.

\subsection{Outline of the paper}

The remainder of this paper is organized as follows. In \cref{sec:theory}, we explore the theoretical properties of the lookout algorithm, particularly the choice of bandwidth based on Rips death diameters, and the consistency of the kernel density estimates produced by this choice. In \cref{sec:modified}, we propose a modified lookout algorithm that improves performance and robustness under a broader range of conditions. We provide some empirical experiments, demonstrating the performance of the modified lookout algorithm on simulated and real datasets, first in contrast to the original lookout (\cref{sec:comparewitholderversion}), and then in contrast to other anomaly detection methods (\cref{sec:compareWithOtherMethods}). Finally, in \cref{sec:conclusion}, we summarize our results and discuss future work.

\section{Lower and upper bounds on $d_{\omega_n}$}\label{sec:theory}

We will first prove a number of limit theorems about $d_{\omega_n}$, that is the $\omega_n$-th death time in the degree 0 Rips persistent homology of the Rips filtration (when ordered increasingly by size) in order to construct an admissible bandwidth $h_n=d_{\omega_n}$. %Recall that For consistency of the kernel density estimator, we want to show that for bandwidth $h_n$ that $h_n \to 0$ and $nh_n^m\to \infty$. This means we need both (asymptoptic) lower and upper bounds on $d_{\omega_n}$.

\subsection{Limit of $d_{n-1}$ for Gaussian Samples}

We will first consider the case where the samples come from a Gaussian distribution. Since $d_{\omega_n}\leq d_{n-1}$, to asymptotically upper bound $d_{\omega_n}$ it is sufficient to show $d_{n-1} \to 0$ as $n\to \infty$. The proof is a straightforward modification of the proof for Theorem 2.1 in \cite{Crackle}. The only real change is that we need to consider a sequence of decreasing radii for the balls, whereas in \cite{Crackle} the radii remain constant (value $1$). We just need to ensure that the radii decrease slowly enough so that the union of balls remains connected. Due to this variation in the proof, we include the calculations for completeness.
%For consistency in the case of
% Gaussian distributions over $\R^2$, we also need to show that the
% index function for the maximum gap in death values will go to
% infinity almost surely.

\begin{theorem}\label{thm:d_n-1 Gaussian}
  Let $S_n$ denote a random sample of $n$ iid points from a fixed multivariate Gaussian in $\R^m$. Then for the Rips filtration over $S_n$ we have $d_{n-1} \to 0$ as $n\to \infty$ almost surely.
\end{theorem}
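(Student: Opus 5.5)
The plan is to treat $d_{n-1}$ as the length of the longest edge of a minimal spanning tree. In degree-zero Rips persistent homology, the finite death diameters are exactly the edge lengths of a Euclidean minimal spanning tree on $S_n$, up to the factor fixed by the diameter convention. Hence $d_{n-1}\le 2r$ as soon as the union of balls $\bigcup_{y\in S_n}B(y,r)$ is connected. It therefore suffices to find a deterministic sequence $r_n\to 0$ such that, almost surely for all large $n$, this union with radius $r_n$ is connected. Following the proof of Theorem 2.1 in \cite{Crackle}, I would split space into a densely sampled ``core'' ball $B(0,R_n)$ and the sparse region outside it. I would take $r_n=(\log n)^{-1/4}$ and assume the Gaussian is standard, since the constant $\Sigma^{1/2}$ only changes constants.

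\textbf{Step 1 (the core is densely covered).} Choose
$$R_n^2 = 2\log n - 2m\log(1/r_n) - 4\log\log n,$$
so that $n\,r_n^m e^{-R_n^2/2}\asymp(\log n)^2$ and $R_n\sim\sqrt{2\log n}$. Tile $B(0,R_n)$ by cubes of side $r_n/(2\sqrt m)$. There are at most polynomially many in $R_n/r_n$, i.e.\ $O((\log n)^{C})$ cubes. Each cube has Gaussian mass at least $c\,r_n^m e^{-R_n^2/2}$, so the probability that some cube is empty is at most
$$(\log n)^{C}\exp\bigl(-c(\log n)^2\bigr).$$
This is summable, so by Borel--Cantelli every cube is eventually occupied. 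Then every point of $B(0,R_n)$ lies within $r_n/2$ of a sample point, and sample points in adjacent cubes are within $r_n$ of each other. So the sample points in the core lie in one component of the union.

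\textbf{Step 2 (no sample point is far outside the core).} Let $\delta_n = C\log\log n/\sqrt{\log n}$. The Gaussian tail estimate gives
$$n\,\P(|Y|>R_n+\delta_n)\lesssim n R_n^{m-2}e^{-R_n^2/2}e^{-R_n\delta_n}\lesssim(\log n)^{C'}e^{-c C\log\log n},$$
which is at most $(\log n)^{-2}$ once $C$ is large. This bound is not summable over all $n$, so I would pass to the subsequence $n_k=2^k$, where it becomes $O(k^{-2})$. For $n_k\le n<n_{k+1}$ I would sandwich. Use the core event for the subsample $S_{n_k}\subseteq S_n$ (adding points never disconnects the core, and $R_{n_k}$, $r_{n_k}$ differ from $R_n$, $r_n$ only by $o(1)$ factors). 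Use the outlier event for the superset $S_{n_{k+1}}$, defined with $R_{n_k}+\delta_{n_k}$.

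\textbf{Step 3 (attaching the outer points).} Take any $y\in S_n$ with $R_n<|y|\le R_n+\delta_n$. Its radial projection $z=R_n y/|y|$ satisfies $|y-z|\le\delta_n$, and by Step 1 some core sample point lies within $r_n/2$ of $z$. Since $\delta_n/r_n\to0$, eventually $\delta_n\le r_n/2$, so $y$ joins the core component via an edge of length at most $r_n$. Thus the whole union is connected, $d_{n-1}\le 2r_n\to0$ eventually, and this holds almost surely.

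The main obstacle is the tuning that makes all three requirements hold together, with the almost-sure statement coming from Borel--Cantelli. We need $r_n\to0$ slowly enough that the core estimate is summable, and $\delta_n\ll r_n$ while the tail bound is still summable along $n_k$. The single-$n$ tail bound is too weak for a union over all $n$, which is why the dyadic blocking with monotonicity arguments is needed. If the constants fail to balance, I would first try shrinking $r_n$ more slowly, e.g.\ $r_n=(\log\log n)^{-1}$.
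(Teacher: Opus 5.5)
Your proof is correct in structure, apart from one fixable slip described below, and on the key point it takes a genuinely different and stronger route than the paper.

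\textbf{Comparison with the paper.} The paper uses the core $B(0,C_n)$ with $C_n=\sqrt{2(\log n-\log\log n)}$, grid side $g_n=(c\log\log n)^{-1/m}$, and the outer radius $R_n=\sqrt{2(\log n+m\log\log n)}$ taken from Theorem 4 of \cite{Crackle}. It shows only that the two good events have probability tending to $1$. Both failure probabilities decay like negative powers of $\log n$ (up to $\log\log n$ factors), so they are not summable. The argument therefore yields $\P(d_{n-1}\le r(n))\to 1$, i.e.\ convergence in probability, and the final passage to ``almost surely'' is not justified.

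Your tuning repairs exactly this. With $r_n=(\log n)^{-1/4}$ and the core chosen so that each cell expects $(\log n)^2$ points, the core failure probability is $e^{-c(\log n)^2}$ up to polylogarithmic factors, which is summable over all $n$. The dyadic blocking (core event for $S_{n_k}$, outlier event for $S_{n_{k+1}}$ at level $R_{n_k}+\delta_{n_k}$) turns the non-summable $(\log n)^{-2}$ tail bound into a summable $O(k^{-2})$ one. You should state explicitly that $S_n$ is the first $n$ terms of a single iid sequence, since the nesting $S_{n_k}\subseteq S_n\subseteq S_{n_{k+1}}$ requires it.

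The paper's version is shorter because it outsources the estimates to \cite{Crackle}. Yours is self-contained and proves the mode of convergence actually claimed. The price is tighter constraints: $\delta_n\ll r_n$ forces $r_n\gg\log\log n/\sqrt{\log n}$, hence $R_nr_n\to\infty$.

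\textbf{The slip in Step 1.} That last consequence breaks the claim that every cube has mass at least $c\,r_n^me^{-R_n^2/2}$. A cube of side $r_n/(2\sqrt m)$ in a tiling of $B(0,R_n)$ can reach radius $R_n+r_n/2$. The Gaussian density drops by a factor $e$ over a radial distance $1/R_n\ll r_n$ (here $R_nr_n\approx\sqrt2(\log n)^{1/4}$). So a cube that meets $B(0,R_n)$ only in a thin sliver has mass $o(r_n^me^{-R_n^2/2})$. The uniform lower bound therefore fails for boundary cubes. With it goes the claim that every point of $B(0,R_n)$, in particular the projection $z$ with $|z|=R_n$ used in Step 3, lies within $r_n/2$ of a sample.

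The repair is routine:
\begin{itemize}
\item Require occupancy only of cubes contained in $B(0,R_n)$. On these the density is at least $(2\pi)^{-m/2}e^{-R_n^2/2}$, so your count and summability are unchanged.
\item These cubes cover $B(0,R_n-r_n/2)$ at scale $r_n/2$.
\item Project outer points onto the sphere of radius $R_n-r_n/2$ instead. This gives $|y-z|\le r_n/2+\delta_n$, so every sample lies within $r_n+\delta_n\le 2r_n$ of the core cluster, and still $d_{n-1}\le 2r_n$.
\end{itemize}
The paper's grid over $B(0,C_n)$ needs the same adjustment, since it uses $f(C_n)$ as a lower bound on every box although $C_ng_n\to\infty$.
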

\begin{proof}
  We will first prove the result in the case of a standard Gaussian, where the probability density function is spherically symmetric and is $c \exp\{-\|x\|^2/2\}$ for some constant $c$. Denote $f(r) = c\exp\{-r^2/2\}$ for this density in terms of the radial coordinate.

  %  Set $r(n)$ to satisfy $r(n)^d c \log \log n = 1$. By construction $r(n)\rightarrow 0$ as $n \rightarrow \infty$. We will use $r(n)$ to show that the core is covered by balls around the samples with radius $r(n)$.
  Set $R_n:=\sqrt{2(\log n+m\log \log n)}$ and  $C_n=\sqrt{2(\log n -\log \log n)}$.
  $R_n$ will be the radius outside which we will expect no samples to occur and $C_n$ will represent the radius of the core of the point cloud - which we expect to be densely covered.

  Cover $B(0, C_n)$ by a grid with side lengths $g_n$. Let $Y(n, C_n, g_n)$ be the event that at least one of the boxes in this grid is empty. Equation 3.1 in \cite{Crackle} states:
  $$
  \mathbb{P}(Y(n,C_n,g_n)) \leq (2g_n^{-1})m (C_n)^m\exp\{-ng_n^m f(C_n)\}.
  $$
  %Set $R_n^c=\sqrt{2(\log n -\log \log n)}$, which will represent the core of the point cloud.
  Set $g_n$
  %=\frac{r(n)}{2\sqrt{d}}$
  so that $g_n^m=\frac{1}{c \log \log n}$.
  In addition, note that $f(C_n)= c \exp\{-\log n+\log\log n\}=c\frac{\log n}{n}$.
  Combined, we get
  \begin{align*}
    \mathbb{P}(Y(n,C_n,g_n))
    & \leq 2^{m}c \log \log n (2(\log n -\log \log n))^{m/2}\exp\left(\frac{-nc\log n}{cn\log \log n}\right)
    \leq k\frac{(\log n)^{m/2}\log\log n }{\exp(\frac{\log n}{\log \log n})}
  \end{align*}
  where $k$ is a constant independent of $n$. For large $n$, $\log n>(m\log\log n) \log \log n$ and thus
  $$
  \exp\left(\frac{\log n}{\log \log n}\right)>\exp(m\log\log n)=(\log n)^m.
  $$
  Thus, for large $n$ we can bound
  $$
  \mathbb{P}\left(Y\left(n,C_n, g_n\right)\right)\leq k \frac{\log\log n }{(\log n)^{m/2}}
  $$
  which goes to $0$ as $n \to \infty$.

  %Set $R_n^2:=\sqrt{2(\log n+d\log \log n)}$.

  Let $X(n, R_n)$ be the event that all the samples lie inside $B(0, R_n)$.
  From Theorem 4 in \cite{Crackle} we have $\lim_{n\to \infty} \E[|S_n \cap B(0,R_n)^c|]=0$. As $|S_n \cap B(0,R_n)^c|$ must be an integer between $0$ and $n$ this implies that $\lim_{n\to \infty} \mathbb{P}(X(n,R_n))=1$.

  %that the probability of there being a point outside the ball centered at the origin with radius $R_n:=\sqrt{2(\log n +d\log\log n)}$ goes to zero as $n\to \infty$. This implies that we can assume all points are within $B(0,R_n)$ and, from our grid size, that every point in $B(0, C_n)$ is within $\frac{2\sqrt{d}}{ c \log \log n}$ of a sample point.

  We want to bound the distance from any point in $B(0, R_n)$ to its closest point in $B(0, C_n)$ which is the same as bounding $R_n-C_n$.
  We have $R_n+C_n= \sqrt{2(\log n +m\log\log n)}+\sqrt{2(\log n -\log \log n)} > 2\sqrt{\log n}$ so
  \begin{align*}
    R_n-C_n &=\frac{R_n^2-C_n^2}{R_n+C_n}\\
    &<\frac{2(\log n +m\log\log n)-2(\log n -\log \log n)}{2\sqrt{\log n}}\\
    &=\frac{(m+1)\log\log n}{\sqrt{\log n}}
  \end{align*}

  Suppose that both $Y(n, C_n, g_n)$ does not hold (every grid box in the covering the core contains at least one sample) and $X(n, R_n)$ does hold (all samples within a ball of radius $R_n$). That is every sample lies in $B(0,R_n)$ and there is a sample in every box in the grid over $C_n$. Set
  $$
  r(n):= \frac{2\sqrt{m}}{(c\log \log n)^{1/m}} + \frac{(m+1)\log\log n}{\sqrt{\log n}}
  $$
  which was chosen so $r(n)>(R_n-C_n) + 2\sqrt{m}g_n$.
  By construction we have $\cup_{x\in S_n}B(x, r(n))$ must be connected. This is because every sample is within $(R_n-C_n)$ of a grid point and every grid point is within $2\sqrt{m}g_n$ of a sample point.

  This implies that $d_{n-1}\leq r(n)$ as $d_{n-1}$ is the supremum of radii $r$ where $\cup_{x\in S_n}B(x, r)$ is disconnected.

  %This then implies that every point within $R_n^c$
  %Set $r(n)$ to satisfy $r(n)^d c \log \log n = 1$. By construction $r(n)\rightarrow 0$ as $n \rightarrow \infty$.

  %  Let $s(n)=r(n) + (d+1)\frac{\log\log n}{\sqrt{\log n}}$.
  %  Let $X(n,r(n))$ be the event that $\cup_{x\in S_n} B(x, s(n))$ is connected.
  As both $Y(n, C_n, g_n)$ and $X(n, R_n)$ occur with probability $1$ as $n$ grows to infinity, we have $d_{n-1}\leq r(n) \to 0$ as $n\to \infty$ almost surely.

  %  We want to show that %there is a function $r(n)$ such that $r(n)\rightarrow 0$ as $n \rightarrow \infty$ and
  %  $\mathbb{P}(X(n,s(n)))\rightarrow 1$ as $n\rightarrow \infty$.
  %  This implies that $d_{n-1}\leq s(n)$ must also limit to $0$.

  %  %Let $s(n)=r(n) + (d+1)\frac{\log\log n}{\sqrt{\log n}}$.
  %  Then by the above arguments $\P(X(n,s(n)))\rightarrow 1$ as $n\rightarrow \infty$. This implies that $0<d_{n-1}<s(n)$. We also have $s(n)\to 0$ as $n\to \infty$. Together, they squeeze $d_{n-1}$ to satisfy $d_{n-1}\to 0$ as $n\to \infty$.

  To then extend the result to general multivariate Gaussian distributions, observe that our sample of points $\hat{S}_n$ is the linear transformation $T$ of a sample of points $S_n$ of the standard Gaussian. If $\cup_{x\in S_n} B(x, r)$ is connected, then we can infer that $\cup_{x\in S_n} B(T(x), \|T\|r)$ is also connected. This implies that $d_{n-1}(\hat{S}_n)\leq \|T\|d_{n-1}(S_n)$ and thus $d_{n-1}(\hat{S}_n) \to 0$ as $n\to \infty$.
\end{proof}

\subsection{A counterexample where $d_{\omega_n}$ does not converge to $0$}
%Consequences of crackle for distributions with fat tails}

There are many potential indexing functions $\omega_n$ that we can choose to decide which death time to use to compute the bandwidth matrix. We are interested in using a bandwidth that is a function $f(d_{\omega_n})$ for some strictly increasing nice function $f$ (with $f(0)=0$). Thus, in order for this choice $f(d_{\omega_n})$ to be an admissible bandwidth we need $d_{\omega_n} \to 0$ as $n \to \infty$.

If our samples are Gaussian, then we know that $d_{\omega_n}$ satisfies this limit. However, when samples come from a fat-tailed distribution, we can construct examples where with positive probability $d_{\omega_n}$ does not converge to $0$. The methods here are highly influenced by those within \cite{Crackle} but with a different focus we we are more concerned in bounding the probability that there are at least a specified number of connected components instead of limit theorems of expected number of components. 

%Consider the distribution on $\R^m$ with the density function $f(x)=\frac{c_m}{1+\|x\|^{m+1}}$ where $c_m$ is the appropriate constant. 

For the remainder of this subsection we will consider the distribution on $\R^m$, $m\geq 3$, with the density function $f(x)=\frac{c_m}{1+\|x\|^{m+1}}$ where $c_m$ is the appropriate constant. %Set $R_n=\frac{2n^{2/m}}{c_m s_{m-1}}$ where $s_{m-1}$ is the volume of the $(m-1)$-sphere.

A useful quantity is the number of samples with radius at least $R$. Let $A_n^{[R, \infty)}$ be the number of samples lying in $\{x\in \mathbb{R}^m\mid \|x\|\geq R\}.$

We will want to show that outside a certain radius $R_n$ that all the points are isolated in the Rips complex at length scale $1$ and thus each contributing their own connected component. A sufficient condition for this to hold is for there to be no edges in the Rips complex at length scale $1$ between points of radius at least $R_n-1$. Note that we will be not claiming anything about the points in the annular region from radius $R_n-1$ to $R_n$ being isolated as they could be connected to points with smaller radius.

\begin{lemma}\label{lem:bound edges}
Fix a sequence of positive numbers $R_n$.
Let $S_n$ be a set of $n$ point sampled i.i.d. according to density $f$. Let $g(n)$ be any function bounded above by $\sqrt{\frac{R_n^{m-1}}{\log n}}$.
Let $Y_n$ be the number of edges in the Rips complex at radius $1$ between points in $\{x\in \mathbb{R}^m\mid \|x\|>R_n-1\}$.
We have the limit $$\P\left(Y_n=0 \mid A^{[R_n-1, \infty)}_n < \sqrt{\frac{R_n^{m-1}}{\log n}} \text{ and } A^{[R_n, \infty)}>g(n)\right)\to 1$$ as $n\to \infty$.
\end{lemma}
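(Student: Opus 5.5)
Throughout, write $k_n:=\sqrt{R_n^{m-1}/\log n}$ and $T_n:=\{x\in\R^m \mid \|x\|> R_n-1\}$. The plan is to condition on the number of points in the outer region, use the fact that such points are conditionally i.i.d.\ from the restriction of $f$ to that region, and then apply a union bound over pairs. Conditional on $A_n^{[R_n-1,\infty)}=j$, the $j$ points in $T_n$ are i.i.d.\ with density $f(x)\mathbf{1}_{T_n}(x)/p_n$, where $p_n=\int_{T_n} f$. The extra conditioning event $A_n^{[R_n,\infty)}>g(n)$ depends only on how many of these $j$ points fall in the sub-region $\{\|x\|\geq R_n\}$. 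So I would bound $\E[Y_n\mid A^{[R_n-1,\infty)}_n=j,\ A^{[R_n,\infty)}_n=\ell]$ uniformly in $\ell$. Given the split, the points are i.i.d.\ within the annulus and within the outer ball-complement respectively. Since $g(n)\le k_n$, the conditioning event has positive probability for the relevant $j<k_n$, and the bound will not depend on $\ell$. By Markov's inequality, $\P(Y_n>0\mid\cdot)\le \E[Y_n\mid\cdot]$, so it suffices to show this expectation tends to $0$.

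Next I bound the probability that two such conditioned points are within distance $1$, calling this $q_n$. Both conditional densities are supported in $T_n$ and bounded by $\sup_{T_n} f/P(\text{region})$. Fix one point $x$. The probability that the other lies in $B(x,1)$ is at most $\omega_m \sup f / P(\text{region})$, where $\omega_m$ is the volume of the unit ball. We have $\sup_{\|y\|\ge R_n-2} f \lesssim R_n^{-(m+1)}$. The mass $p_n$ of $\{\|x\|>R_n-1\}$ is of order $R_n^{-1}$, since the tail integral $\int_R^\infty r^{m-1}/r^{m+1}\,dr = 1/R$, and similarly for $\{\|x\|\ge R_n\}$. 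The annulus $R_n-1<\|x\|<R_n$ has mass of order $R_n^{-2}$, which is smaller. This is the delicate point: conditioning on a point lying in the thin annulus concentrates it, and the resulting density there is of order $R_n^{-(m+1)}/R_n^{-2}=R_n^{-(m-1)}$. Similarly, for the outer region the density is at most $R_n^{-(m+1)}/R_n^{-1}=R_n^{-m}\le R_n^{-(m-1)}$. In every case, then, the pair probability satisfies $q_n\le C R_n^{-(m-1)}$ for a constant $C$ depending only on $m$.

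Finally, there are fewer than $\binom{k_n}{2}< k_n^2/2$ pairs, so
\begin{equation*}
\E[Y_n\mid\cdot]\le \tfrac{1}{2}k_n^2\, C R_n^{-(m-1)} = \frac{C}{2\log n}\to 0,
\end{equation*}
uniformly over the conditioning values. Averaging over $j$ and $\ell$ gives the limit claimed in the statement.

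I expect the main obstacle to be the second step: handling the conditional distribution carefully so that the annulus–annulus pairs are correctly bounded by $R_n^{-(m-1)}$ rather than the naive $R_n^{-m}$. This requires the annulus mass to be bounded below by order $R_n^{-2}$, which holds because $f$ is radial and comparable to $\|x\|^{-(m+1)}$ there. A second requirement is implicit: $R_n\to\infty$, or at least $R_n\ge 2$, so that the density bounds apply. If $R_n$ does not grow, $k_n\to 0$ and the conditioning event with $A>g(n)\ge0$ forces the count to be at most $0$ points, so the statement is trivial.
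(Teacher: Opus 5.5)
Your argument is correct and is essentially the paper's: Markov's inequality on $Y_n$, at most $k_n^2/2$ pairs, and a per-pair edge probability of order $R_n^{-(m-1)}$ together give $\E[Y_n\mid\cdot]\le C/(2\log n)$. The only real difference is that you use the annulus/outer split to justify why the extra conditioning on $A^{[R_n,\infty)}_n>g(n)$ leaves the per-pair bound intact, and you note that annulus--annulus pairs genuinely reach order $R_n^{-(m-1)}$. The paper instead computes the pair probability for draws from $f$ restricted to $\{\|x\|\ge R_n-1\}$, using the loose unit-ball bound $C/r^m$ rather than $C/r^{m+1}$, and then merely asserts that the extra conditioning is harmless. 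One small correction: $g(n)\le k_n$ does not by itself make the conditioning event non-empty (take $g(n)=k_n$), so its positivity has to be assumed, as the paper tacitly does.
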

\begin{proof}

%Consider the Rips complex at radius $1$. We want to lower bound the probability that all the points outside $B(0,R)$ are isolated. A sufficient condition for this is that there are no edges between points outside $B(0, R-1)$.

\begin{align*}
  \mathbb{P}(\|x-y\|\leq 1|\mid \|x\|, \|y\|\geq R-1)&=
  \frac{1}{\mathbb{P}(\|x\|,\|y\|\geq R-1)}\int_{R-1}^\infty \frac{c_m s_{m-1}r^{m-1}}{1+r^{m+1}} \mathbb{P}(y\in B(x,1)\mid |x|=r) \,dr
\end{align*}

If $|x|=r$ ($r$ sufficiently large), then the probability that another sample $y$ lies in $B(x,1)$ is bounded above by $\int_{B(x,1)} \frac{c_m}{1+\|z\|^{m+1}}dz \leq C/r^m$ for some constant $C$ independent of $r$. This implies that there exists a constant $C_1$ independent of $R$ such that
\begin{align*}
  \mathbb{P}(\|x-y\|\leq 1|\mid \|x\|, \|y\|>R-1)& < (R-1)^2 C_1\int_{R-1}^\infty \frac{1}{r^{m+2}}\,dr.
\end{align*}
Thus there is some constant $C_2$ independent of $R$ (for sufficiently large $R$) such that
$$
\mathbb{P}(\|x-y\|\leq 1|\mid \|x\|, \|y\|>R-1)\leq \frac{C_2}{R^{m-1}}.
$$

Let $Y_n$ be the number of edges in the Rips complex at radius $1$ between points outside $B(0,R_n-1)$. %When $S_n\leq 2n^{1-2/m}$ and $\hat{S}_n\leq n^{1-2/m}$ then the number of samples outside $B(0, R-1)$ is upper bounded by $3n^{1-2/m}$. 
For each pair of points the probability of an edge between them is bounded above by $C_2/R_n^{m-1}$. Changing the function $g(n)$ will not invalidate this upper bound. Together we get the upper bound on expectation
\begin{align*}
  \E\left[Y_n|A^{[R_n-1, \infty)}_n\leq \sqrt{\frac{R_n^{m-1}}{\log n}} \text{ and } A^{[R_n, \infty)}>g(n)\right]
  \leq \left(\frac{R_n^{m-1}}{\log n}\right)\frac{C_2}{2R_n^{m-1}}
  = \frac{C_2}{2\log n}
\end{align*}
for some $C_2$ independent of (sufficiently large) $n$.

Since $Y_n$ can only be a non-negative integer we have $$\mathbb{P}\left(Y_n\neq 0|A^{[R-1, \infty)}_n\leq \sqrt{\frac{R_n^{m-1}}{\log n}} \text{ and } A^{[R_n, \infty)}>g(n)\right)\leq \mathbb{E}\left[Y_n|A^{[R_n-1, \infty)}_n\leq \sqrt{\frac{R_n^{m-1}}{\log n}}\right].$$ As $\E\left[Y_n|A^{[R_n-1, \infty)}_n\leq \sqrt{\frac{R_n^{m-1}}{\log n}} \text{ and } A^{[R_n, \infty)}>g(n)\right]\to 0$ as $n\to \infty$ we can conclude that $$\mathbb{P}(Y_n = 0|A^{[R_n-1, \infty)}_n\leq \sqrt{\frac{R_n^{m-1}}{\log n}}\text{ and } A^{[R_n, \infty)}>g(n))\to 1$$  as $n\to \infty$.

%Fix any $\delta_2\in (0,1)$. We have $\E[Y_n|S_n<2n^{1-2/m}  \text{ and } \hat{S}_n\leq n^{1-2/m}]<1-\delta_2$ for sufficiently large $n$. As $Y_n$ is a non-negative integer we get $\mathbb{P}(Y_n=0\mid S_n<2n^{1-2/m} \text{ and } \hat{S}_n\leq n^{1-2/m})>\delta_2$ for sufficiently large $n$.

\end{proof}

Given the above lemma, our next step is to show that with positive probability that we have enough points with radius at least $R_n$ but not too many of radius more than $R_n-1$. 

\begin{lemma}\label{lem:bound points}
Let $S_n$ be a sample of points i.i.d. according to $f$.
Set $R_n=\frac{n^{2/m}}{c_m s_{m-1}}$ where $s_{m-1}$ is the volume of the $(m-1)$-sphere.

Then $\mathbb{P}\left( A^{[R_n-1, \infty)}_n < 3n^{1-2/m} \text{ and } A^{[R_n, \infty)}_n > \frac{1}{4}n^{1-2/m}\right)>1/3$ for $n$ sufficiently large.
\end{lemma}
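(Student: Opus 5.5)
The plan is to observe that both counts are binomial random variables. I would bound their means using the explicit tail of $f$, and then use Markov's inequality for the upper event and Chebyshev's inequality for the lower event. For any $R>1$, the count $A^{[R,\infty)}_n$ is $\mathrm{Bin}(n,p_R)$, where, in polar coordinates,
$$
p_R=\mathbb{P}(\|x\|\geq R)=\int_R^\infty \frac{c_m s_{m-1} r^{m-1}}{1+r^{m+1}}\,dr .
$$

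\emph{Step 1 (tail estimates).} For $r\geq 1$ we have $r^{m+1}\leq 1+r^{m+1}\leq 2r^{m+1}$. This gives the elementary two-sided bound
$$
\frac{c_m s_{m-1}}{2R}\leq p_R\leq \frac{c_m s_{m-1}}{R}.
$$
Apply it with $R=R_n=\frac{n^{2/m}}{c_m s_{m-1}}$ and with $R=R_n-1$ (both exceed $1$ for large $n$). This yields
$$
\E\big[A^{[R_n,\infty)}_n\big]\geq \tfrac12 n^{1-2/m},
\qquad
\E\big[A^{[R_n-1,\infty)}_n\big]\leq \frac{n\,c_m s_{m-1}}{R_n-1}=n^{1-2/m}\,(1+o(1)).
$$
Since $m\geq 3$, we have $n^{1-2/m}\to\infty$, which the concentration step uses.

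\emph{Step 2 (upper event via Markov).} By Markov's inequality,
$$
\mathbb{P}\big(A^{[R_n-1,\infty)}_n\geq 3n^{1-2/m}\big)\leq \frac{(1+o(1))\,n^{1-2/m}}{3n^{1-2/m}}=\tfrac13+o(1).
$$
This is at most $0.35$ for large $n$.

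\emph{Step 3 (lower event via Chebyshev).} Write $\mu_n=\E[A^{[R_n,\infty)}_n]\geq \tfrac12 n^{1-2/m}$. The variance of a binomial is at most its mean, so $\mathrm{Var}\leq \mu_n$. The threshold $\tfrac14 n^{1-2/m}$ is at most $\mu_n/2$. Chebyshev's inequality then gives
$$
\mathbb{P}\big(A^{[R_n,\infty)}_n\leq \tfrac14 n^{1-2/m}\big)\leq \frac{\mu_n}{(\mu_n/2)^2}=\frac{4}{\mu_n}\leq \frac{8}{n^{1-2/m}}\to 0 .
$$

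\emph{Step 4 (combine).} A union bound over the two complementary events gives that the probability of the desired joint event is at least $1-\tfrac13-o(1)$. This exceeds $1/3$ for $n$ sufficiently large.

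The only step needing care is Step 1. The Markov bound in Step 2 works only because the mean of the outer count is asymptotically $n^{1-2/m}$ with constant $1$, not merely of that order. So the upper tail estimate must have the sharp leading constant $c_m s_{m-1}/R$, and the shift from $R_n$ to $R_n-1$ must only change it by a factor $1+o(1)$; both hold since $R_n\to\infty$. The factor-$2$ loss in the lower tail estimate is harmless, because Chebyshev only needs $\mu_n\to\infty$ with $\mu_n$ bounded below by a constant multiple of the threshold.
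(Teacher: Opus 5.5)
Your argument works and follows a different route from the paper, but Step~1 has a constant slip that you should fix. With $R_n=n^{2/m}/(c_ms_{m-1})$ you get $c_ms_{m-1}/R_n=(c_ms_{m-1})^2n^{-2/m}$, not $n^{-2/m}$. The paper's proof makes the same identification when it writes $p\in(\tfrac12 n^{-2/m},n^{-2/m})$. Normalising $f$ gives $c_ms_{m-1}=(m+1)\sin(\pi/(m+1))/\pi\in[2\sqrt{2}/\pi,1)$ for $m\geq 3$.

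\emph{Step~2.} The correction only helps here. Markov bounds the failure probability by $(c_ms_{m-1})^2/3+o(1)$. So what you actually need is a leading constant below $2$, not one equal to $1$.

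\emph{Step~3.} Here the factor-$2$ loss does matter. Your crude lower tail bound gives only $\mu_n\geq\tfrac12(c_ms_{m-1})^2n^{1-2/m}$, which does not imply $\tfrac14n^{1-2/m}\leq\mu_n/2$. The repair is one line. Since $\frac{r^{m-1}}{1+r^{m+1}}\geq r^{-2}\bigl(1-r^{-(m+1)}\bigr)$, you get $p_R\geq\frac{c_ms_{m-1}}{R}\bigl(1-R^{-(m+1)}\bigr)$. Hence $\mu_n\geq(1-o(1))(c_ms_{m-1})^2n^{1-2/m}>0.8\,n^{1-2/m}$ for large $n$, and your Chebyshev estimate then holds as written.

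\emph{Comparison with the paper.} The paper splits $A^{[R_n-1,\infty)}_n$ into the outer count $A^{[R_n,\infty)}_n$ plus the count in the unit annulus $\{R_n-1\leq\|x\|<R_n\}$. It traps the outer count in $(\tfrac14n^{1-2/m},2n^{1-2/m})$ by Chebyshev. It then observes that, conditionally on the outer count, each remaining point falls in the annulus with probability $O(R_n^{-2})=O(n^{-4/m})$. It controls the annulus count by coupling to $\mathcal{B}(n,\hat p_n)$ and multiplies $\tfrac23\cdot\tfrac12$.

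You never separate the annulus. A single Markov bound on the whole shell count, plus a union bound, replaces the conditioning and coupling; it uses only that shifting the radius by $1$ changes the tail mass by a factor $1+o(1)$. This is shorter and avoids the paper's sketched coupling step.

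What the paper's decomposition buys is the structural fact that the annulus is almost empty, and, if tracked, a probability tending to $1$. Your Markov step caps you at $\liminf\geq 1-(c_ms_{m-1})^2/3$. That is more than enough for the lemma. You could also get the stronger conclusion within your own framework by applying Chebyshev instead of Markov to $A^{[R_n-1,\infty)}_n$, whose mean is asymptotically $(c_ms_{m-1})^2n^{1-2/m}<3n^{1-2/m}$.
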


\begin{proof}
For each radius $R$ the probability a sample point has modulus at least $R$ (that is, it lies outside $B(0, R)$) is
$$
p_{[R,\infty)}=\int_{R}^\infty \frac{c_m s_{m-1}r^{m-1}}{1+r^{m+1}}\,dr.
$$

 We have $p_{[R, \infty)}\in (\frac{c_m s_{m-1}}{2R}, \frac{c_m s_{m-1}}{R})$ for $R>1$. Thus $A^{[R_n, \infty)}_n \sim \mathcal{B}(n, p)$, the binomial distribution for some $p \in (\frac{n^{-2/m}}{2}, n^{-2/m})$. Using Chebyshev's inequality
$$\mathbb{P}\left(A_n^{[R_n, \infty)}\in \left(\frac{n^{1-2/m}}{4},2n^{1-2/m}\right)\right)>\frac{1}{2}$$.

Given that a sample point $x$ lies in $B(0,R)$, the probability that $x$ has its modulus in $[R-1, R)$ (that is, $x\in B(0, R)\backslash B(0, R-1)$) is
\begin{align*}
  \hat{p}&=\frac{\int_{R-1}^R \frac{c_m s_{m-1}r^{m-1}}{1+r^{m+1}}\,dr }{1-p_{[R, \infty)}}
  % &<\frac{\int_{R-1}^R \frac{c_m s_{m-1}}{r^2}\,dr }{1-p_{R}}\\
  <\frac{-\frac{c_m s_{m-1}}{R}+\frac{c_m s_{m-1}}{R-1}}{1-\frac{c_m s_{m-1}}{R}}
  <\frac{2c_m s_{m-1}}{R^2}
\end{align*}
for sufficiently large $R$. For $R_n=\frac{2n^{2/m}}{c_m s_{m-1}}$ we have $\hat{p}_n\leq C_1 n^{-4/m}$ for large $n$ (with $C_1$ a constant independent of $n$).

Let $\hat{A}^{[R_{n-1}, R_n)}_n$ the number of samples in $\{x\in \mathbb{R}\mid \|x\|\in[R_{n-1}, R_n)\}$ conditioned on there being $A^{[R_n, \infty)}_n$ samples outside $B(0,R_n)$. We have $\hat{A}^{[R_{n-1}, R_n)}_n \sim \mathcal{B}(n-A^{[R_n, \infty)}_n, \hat{p})$. We can couple $\mathcal{B}(n-A^{[R_n, \infty)}_n, \hat{p}_n)$ to $\mathcal{B}(n, \hat{p}_n)$ to imply that $\mathbb{P}(\hat{A}^{[R_n, \infty)}_n\leq n^{1-2/m})>2/3$ for large enough $n$. 

Combining
\begin{align*}
    \mathbb{P}&\left( A^{[R_n-1, \infty)}_n < 3n^{1-2/m} \text{ and } A^{[R_n, \infty)}_n > \frac{1}{4}n^{1-2/m}\right) \\
    &\geq  \mathbb{P}\left( A^{[R_n-1, \infty)}_n < 3n^{1-2/m} \text{ and } A^{[R_n, \infty)}_n \in \left( \frac{1}{4}n^{1-2/m}, 2n^{1-2/m}\right)\right)\\
    & \geq \mathbb{P}\left( \hat{A}^{[R_n-1, \infty)}_n < n^{1-2/m}\mid  A^{[R_n, \infty)}_n \in \left( \frac{1}{4}n^{1-2/m}, 2n^{1-2/m}\right)\right)\mathbb{P}\left( A^{[R_n, \infty)}_n \in \left( \frac{1}{4}n^{1-2/m}, 2n^{1-2/m}\right)\right)\\
    &\geq \frac{2}{3}\frac{1}{2}=\frac{1}{3}
\end{align*}
\end{proof}

We can finally combine these to complete our counterexample. Note that the probabilities are far from tight as we only need to show that $d_{\omega_n}$ does not converge to $0$

\begin{counterexample}
For $\omega_n=n-\frac{1}{4}n^{1-2/m}$ we have
$\mathbb{P}(d_{\omega_n}>1)>1/4$ and thus $d_{\omega_n}$ is not an admissible bandwidth.
\end{counterexample}
\begin{proof}
A sufficient condition for $d_{omega_n}\geq 1$ is for the Rips complex at length scale $1$ to have $\frac{1}{4}n^{1-2/m}$ isolated points. 

One way to orchestrate this many isolated points is to have $A_n^{[R_n, \infty)}>\frac{1}{4}n^{1-2/m}$ and $Y_n=0$. Consider $R_n=\frac{n^{2/m}}{c_m s_{m-1}}$ where $s_{m-1}$ is the volume of the $(m-1)$-sphere.

From \ref{lem:bound points} we have $\mathbb{P}\left( A^{[R_n-1, \infty)}_n < 3n^{1-2/m} \text{ and } A^{[R_n, \infty)}_n > \frac{1}{4}n^{1-2/m}\right)>1/3$ for $n$ sufficiently large. 

By substituting in $R_n$ we can easily check that  $3n^{1-2/m}<\sqrt{\frac{R_n^{m-1}}{\log n}} $. This means we can apply Lemma \ref{lem:bound edges} to say
that for $Y_n$ the number of edges between samples of radius $R_n-1$, we have $\mathbb{P}\left(Y_n=0\mid A^{[R_n-1, \infty)}_n < 3n^{1-2/m} \text{ and } A^{[R_n, \infty)}_n > \frac{1}{4}n^{1-2/m}\right)to 1$ as  $n\to \infty$, and thus at least $\frac{3}{4}$ for sufficiently large $n$.

Combining everything, 
\begin{align*}
    \mathbb{P}(d_{\omega_n}>1)&\geq \mathbb{P}\left(Y_n=0 \text{ and } A^{[R_n-1, \infty)}_n < 3n^{1-2/m} \text{ and } A^{[R_n, \infty)}_n > \frac{1}{4}n^{1-2/m}\right)\\
    \geq \frac{1}{3}\frac{3}{4}=\frac{1}{4}.
\end{align*}

This contradicts $d_{\omega_n}\to 0$ as $n\to \infty$ with probability $1$ and thus $d_{\omega_n}$ can't be an admissible bandwidth. 
\end{proof}

\subsection{Upper bounding the quantile death times $d_{\gamma(n-1)}$}\label{sec:upperboundquartile}

Despite the counterexample in the previous subsection, we can find sequences of death times that will limit to zero under very mild assumptions of the distribution (even with fat tails!).

An approach that generally works is to use a quantile of the death times. That is $d_{\gamma(n-1)}$ for some $\gamma\in [0,1]$. Recall that there are only $n-1$ finite death times, so $d_{n-1}$ is the largest finite death value. Strictly speaking, we need an integer value, so you can interpret $\gamma(n-1)$ as either the floor or the ceiling. More precisely in this subsection we will consider sequences $\omega_n$ where there exists $\gamma\in (0,1)$ with $\omega_n\leq \gamma n$ for large enough $n$.

%The following is a standard useful lemma, but we include it for completeness.

%\begin{lemma}\label{lem:samplescover}
%  Let $U_1, \ldots, U_l\subset \R^m$ and $X$ a random sample such that $\P(X\in U_i)>\delta $ for all $i$. Let $\{x_1, \ldots x_n\}$ be i.i.d. draws of $X$. If $Z_i$ is the event that at no samples $x_j$ lie in $U_i$ then
% $\P(\cup Z_i) \leq l e^{-n\delta}.$
%\end{lemma}
%\begin{proof}
%  For a single draw $x_j$ we have $\P(x_j\notin Z_i)\leq 1-\delta\leq e^{-\delta}$ and thus for $n$ draws $\P(Z_i)\leq e^{-n\delta}$. Combining over all the $Z_i$ we gain the upper bound
%  $\P(\cup Z_i)\leq \sum_i \P(Z_i)\leq le^{-n\delta}$.
%\end{proof}

It is useful to have notation for offsets of Euclidean subsets. For any subset $L\subset \R^m$ and any $\delta>0$, let $L^\delta$ denote the set $\{y+v \mid y\in L, \|v\| \leq \delta\}$.

\begin{lemma}\label{lem:covernumber}
  Let $f: \R^m\to [0,\infty)$ be a Lipschitz probability density function (with Lipschitz constant $M$), and $L=f^{-1} [c,\infty)$ be a non-empty superlevel set of $f$ for some $c>0$. For all $\varepsilon\in(0, c/(2M)]$ there exists $\{x_1, \ldots x_l\}\subset L$ such that $L \subseteq \cup_i B(x_i, \varepsilon)$, $l\leq\frac{2^{m}}{c b_m \varepsilon^m}$.
  %and $\P(X\in B(x_i, \varepsilon))>\frac{c b_m}{2} \varepsilon^m$ for all $i$ (where $X$ is drawn according to pdf $f$).
  Here $b_m$ is the volume of the unit ball in $\R^m$.
  %Furthermore, $\cup_i B(x_i, \varepsilon)$ contains at most $$??$$
  % connected components.
\end{lemma}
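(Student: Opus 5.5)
The plan is a packing argument: pick a maximal $\varepsilon$-separated subset of $L$, and bound its size by comparing $\int f \le 1$ with the mass of disjoint small balls around its points. I do not yet have the exact constant $2^m$; the straightforward version gives $\tfrac{4}{3}\cdot 2^m$, and closing that gap is the main obstacle discussed in the last paragraph.

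\emph{Step 1 (cover).} Choose a maximal subset $\{x_1,\dots,x_l\}\subset L$ with $\|x_i-x_j\|\ge\varepsilon$ for all $i\ne j$. Such a set is finite by the count in Step 3, so Zorn's lemma is not needed: a greedy choice terminates. By maximality, every $y\in L$ satisfies $\|y-x_i\|<\varepsilon$ for some $i$, since otherwise $y$ could be added to the set. Hence $L\subseteq\cup_i B(x_i,\varepsilon)$, and $x_i\in L$ as required.

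\emph{Step 2 (disjoint balls).} Separation by $\varepsilon$ makes the balls $B(x_i,\varepsilon/2)$ pairwise disjoint. Since $f(x_i)\ge c$ and $f$ is $M$-Lipschitz, every $y\in B(x_i,\varepsilon/2)$ satisfies $f(y)\ge c-M\varepsilon/2$. Because $\varepsilon\le c/(2M)$, this gives $f(y)\ge 3c/4$ on each ball, so each ball has mass at least $\tfrac34 c\,b_m(\varepsilon/2)^m$.

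\emph{Step 3 (count).} Disjointness and $\int f=1$ give $l\cdot\min_i\int_{B(x_i,\varepsilon/2)}f\le 1$. If each ball had mass at least $c\,b_m(\varepsilon/2)^m$, this would be exactly $l\le 2^m/(c\,b_m\varepsilon^m)$. Step 2 only gives the lower bound $\tfrac34 c\,b_m(\varepsilon/2)^m$, hence $l\le \tfrac43\cdot 2^m/(c\,b_m\varepsilon^m)$, which is a factor $4/3$ too weak.

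\emph{Step 4 (closing the $4/3$ gap; main obstacle).} The stated constant needs average density at least $c$ on the packing balls, which does not follow from $f(x_i)\ge c$ alone, because the density may dip below $c$ within $\varepsilon/2$ of $x_i$. My first attempt would be to separate the centres by more than $\varepsilon$ but still cover at radius $\varepsilon$. For instance, run the greedy procedure in decreasing order of $f(x_i)$ and use the mean-value bound
\begin{equation*}
\int_{B(x,\rho)}f\ \ge\ b_m\rho^m\Bigl(f(x)-\tfrac{m}{m+1}M\rho\Bigr),
\end{equation*}
so that centres in the interior of $L$ carry extra mass. Failing that, I would check whether the intended reading of $L$ or of the Lipschitz hypothesis (for example, measuring $c$ on the slightly smaller superlevel set $f^{-1}[c+M\varepsilon/2,\infty)$) supplies the missing factor. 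If neither works, the honest conclusion is the bound with the extra factor $4/3$. That weaker bound would still suffice for the later use of this lemma, since only the order $\varepsilon^{-m}$ matters there.
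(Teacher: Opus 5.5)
Your Steps 1--3 are correct and follow the paper's argument exactly. A maximal $\varepsilon$-separated subset of $L$ gives the cover, the balls $B(x_i,\varepsilon/2)$ are disjoint, the Lipschitz condition bounds $f$ below on each ball, and $\int f\le 1$ bounds $l$. The paper reaches the constant $2^m$ only through an arithmetic slip. It shows $f\ge c/2$ on $B(x_i,\varepsilon/2)$ (using $\varepsilon\le c/M$) and then writes $\int_{B(x_i,\varepsilon/2)}f\ge c\,b_m\varepsilon^m/2^m$. But $f\ge c/2$ on a ball of radius $\varepsilon/2$ only gives $c\,b_m\varepsilon^m/2^{m+1}$. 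So the paper's own argument yields $l\le 2^{m+1}/(c\,b_m\varepsilon^m)$. Your $\tfrac43\cdot 2^m$, which uses the hypothesis $\varepsilon\le c/(2M)$, is the sharper of the two.

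You cannot find the missing factor in Step 4, because the stated constant is false. Take $m=1$, where $b_1=2$ and the claim reads $l\le 1/(c\varepsilon)$.
\begin{itemize}
\item Let $c=1$, $M=100$ and $\varepsilon=c/(2M)=1/200$, so the claimed bound is $200$.
\item Put $s=1/190$ and $x_j=js$ for $0\le j\le 209$. Let $f(y)=\max_j(1-100|y-x_j|)_+$, plus, far away, a trapezoid of height $1/2$ with slopes $\pm100$. Choose its plateau length so that $\int f=1$.
\item This is possible because the tent chain has mass $209\,s(1-25s)+\frac{1}{100}=\frac{209\cdot165}{190^2}+\frac{1}{100}<0.97$.
\item Then $f$ is a $100$-Lipschitz density, and $L=f^{-1}[1,\infty)=\{x_0,\dots,x_{209}\}$ exactly.
\item The $x_j$ are $1/190>\varepsilon$ apart, so each ball of radius $\varepsilon$ centred in $L$ meets $L$ only in its centre. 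Every cover as in the lemma therefore has $l=210>200$.
\end{itemize}
Let $s\downarrow\varepsilon$ and $c^2/M\to 0$. The number of peaks then approaches $1/\bigl(\varepsilon(c-M\varepsilon/4)\bigr)=\frac{8}{7}(c\varepsilon)^{-1}$. This is your mean-value bound for $m=1$, so that refinement is essentially sharp there.

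Neither reordering the greedy choice nor any reading within the stated hypotheses can recover $2^m$. Your fallback is the right conclusion: the lemma holds with $\frac{4}{3}\cdot 2^m/(c\,b_m\varepsilon^m)$, or with $2^m/\bigl(b_m\varepsilon^m(c-\frac{m}{2(m+1)}M\varepsilon)\bigr)$. In Theorem~\ref{thm:d_alphan to 0}, enlarge $r_n$ by the factor $(4/3)^{1/m}$ to keep $l_n\le n(1-\gamma)/2$. This only changes $C(f,\gamma)$.
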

\begin{proof}
  Fix $\varepsilon \in (0, c/M]$ and take a maximal $(\varepsilon/2)$-separated subset $S = \{x_1, x_2, \ldots, x_l\}$ within $L$. This means that $|x_i-x_j|\geq \varepsilon$ for all $i\neq j$ and that no additional point in $L$ can be added to $S$ without breaking this condition. By construction this implies that $L\subseteq \cup_{i} B(x_i, \varepsilon)$ as otherwise any missed location in $L$ could be added as an additional point and the set remain $(\varepsilon/2)$-separated.

  As each $x_i\in L$ we also have $B(x_i, \varepsilon/2)\subset L^{\varepsilon/2}$. Since $f$ is Lipschitz with Lipschitz constant $M$, we have
  $L^{\varepsilon/2} \subseteq f^{-1}[c-M\varepsilon/2, \infty)$.
  For $\varepsilon<c/M$ we have $B(x_i, \varepsilon/2)\subset f^{-1}[c/2, \infty)$ so $$\P(X\in B(x_i, \varepsilon/2))=\int_{B(x_i, \varepsilon/2)}f dx \geq \frac{c b_m \varepsilon^m}{2^{m}}.$$ Each of the $B(x_i, \varepsilon/2)$ are disjoint so $l\frac{c b_m \varepsilon^m}{2^{m}} \leq \P(X\in \cup_i B(x_i, \varepsilon/2))\leq 1$. We thus get $l\leq \frac{2^{m}} {c b_m \varepsilon^m}$.

  % As $B(x_i, \varepsilon) \subset f^{-1}[c-M\varepsilon, \infty)\subset f^{-1}[c/2, \infty]$ we have $\P(X\in B(x_i, \varepsilon))\geq \frac{c b_m \varepsilon^m}{2}$.
\end{proof}

\begin{theorem}\label{thm:d_alphan to 0}

  Let $d_k$ denote the $k$th smallest death time of the $0$-homology for the Rips filtrations of a set of $n$ points sampled from probability distribution $X$ over $\R^m$ such that the probability density function $f$ exists and for over the region where $f\leq 1$ it is Lipschitz with Lipschitz constant $M$ . Note that there are $n-1$ finite death times. Fix a $\gamma\in (0,1)$ and choose any sequence of integers $\{\omega_n\}$ such that $0\leq \omega_n \leq \gamma n$ for all sufficiently large $n$. Then there exists a constant $C(f,\gamma)$ (dependent on $f$ and $\gamma$ but not $n$) such that
  $$
  \mathbb{P}\left(d_{\omega_n} \leq C(f,\gamma)n^{-1/m}\right) < 1- \frac{12}{n(1-\gamma)}
  $$
  which limits to $1$ as $n\to \infty$.
  %Then for any sequence $\{g_n\}$, such that $g_n\to 0$ as $n \to \infty$, we have
  %  $$\lim_{n\to \infty}\mathbb{P}\left(d_{\omega_n} \geq n^{-1/m}g_n\right)=1.$$
\end{theorem}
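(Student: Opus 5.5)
The plan is to bound the number of connected components of the Rips complex at scale $r_n = C n^{-1/m}$. If that number is at most $n-\omega_n$ with high probability, then at least $\omega_n$ deaths have occurred by scale $r_n$, so $d_{\omega_n}\le r_n$. I read the displayed inequality as the intended lower bound $\mathbb{P}(d_{\omega_n}\le C(f,\gamma)n^{-1/m}) \geq 1-\frac{12}{n(1-\gamma)}$, which is what ``limits to $1$'' requires. The components will be counted by splitting the sample into points inside a superlevel set $L=f^{-1}[c,\infty)$ and points outside it.

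\textbf{Step 1 (choice of $c$).} Since $\{f<c\}$ decreases to $\{f=0\}$ as $c\downarrow 0$, which has probability zero, I pick $c\in(0,1]$ with $q:=\mathbb{P}(X\notin L)\le (1-\gamma)/4$. Taking $c\le 1$ keeps us where $f$ is Lipschitz. The proof of Lemma~\ref{lem:covernumber} only uses the Lipschitz bound on $L^{\varepsilon/2}\setminus L$, where $f<c\le 1$, so the lemma still applies.

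\textbf{Step 2 (covering $L$).} Set $\varepsilon_n=Kn^{-1/m}$, with $K$ chosen so that $\frac{2^m}{cb_mK^m}\le \frac{1-\gamma}{4}$. For $n$ large we have $\varepsilon_n\le c/(2M)$, so Lemma~\ref{lem:covernumber} gives a cover of $L$ by $l\le (1-\gamma)n/4$ balls of radius $\varepsilon_n$. Any two sample points in the same ball are at distance at most $2\varepsilon_n$. Hence, at Rips scale $2\varepsilon_n$ (adjusting by a factor depending on whether the filtration is indexed by radius or diameter), all sample points in $L$ lie in at most $l$ components. Every other point contributes at most one component. So the number of components is at most $l+N_{\text{out}}$, where $N_{\text{out}}\sim\mathcal{B}(n,q)$ counts the sample points outside $L$.

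\textbf{Step 3 (probability bound).} Chebyshev's inequality gives
$$\mathbb{P}\bigl(N_{\text{out}}\ge (1-\gamma)n/2\bigr)\le \mathbb{P}\bigl(|N_{\text{out}}-nq|\ge (1-\gamma)n/4\bigr)\le \frac{16nq}{(1-\gamma)^2n^2}\le \frac{4}{n(1-\gamma)},$$
which is within the stated $12/(n(1-\gamma))$. On the complementary event the number of components is at most $\tfrac34(1-\gamma)n\le n-\gamma n\le n-\omega_n$, so $d_{\omega_n}\le 2\varepsilon_n=2Kn^{-1/m}$. This gives the result with $C(f,\gamma)=2K$ for $n$ large. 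Small $n$ is absorbed by enlarging the constant or by noting that the bound is trivial there.

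\textbf{Main obstacle.} The delicate point is the Lipschitz hypothesis, which holds only on $\{f\le1\}$. I must check that the containment $L^{\varepsilon/2}\subseteq f^{-1}[c/2,\infty)$ still follows: along a segment from a point of $L$, as long as $f$ stays below $1$ the Lipschitz bound applies, and once $f$ exceeds $1$ we already have $f>c/2$. Beyond that, the bookkeeping between component counts and the death index (a one-off in $n-1$ versus $n$) has to be done carefully.
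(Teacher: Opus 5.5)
Your proposal is correct and follows essentially the same route as the paper's proof: a superlevel set $L=f^{-1}[c,\infty)$ with small exterior mass, a cover of $L$ via Lemma~\ref{lem:covernumber} by a number of balls that is at most a fixed fraction of $n$, each of radius of order $n^{-1/m}$, and a component count bounded by the number of balls plus the binomial number of sample points outside $L$, controlled by Chebyshev's inequality. Your reading of the display as $\mathbb{P}(d_{\omega_n}\le C(f,\gamma)n^{-1/m})\ge 1-\frac{12}{n(1-\gamma)}$ is exactly what the paper's proof establishes, and the remaining differences are cosmetic:
\begin{itemize}
\item You use the threshold $(1-\gamma)/4$ where the paper uses $(1-\gamma)/3$ and $n(1-\gamma)/2$, which gives you the sharper bound $4/(n(1-\gamma))$.
\item Your free constant $K$ also absorbs a factor-of-$2$ slip in the lemma's volume bound. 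The ball $B(x_i,\varepsilon/2)$ has mass at least $c b_m\varepsilon^m/2^{m+1}$, not $c b_m\varepsilon^m/2^m$.
\item You explicitly handle the hypothesis that $f$ is Lipschitz only on $\{f\le 1\}$, which the paper passes over. Your segment argument settles it provided $f$ is continuous.
\end{itemize}
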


%  Let $f:\R^m \to \R$ be a Lipschitz probability density function (with Lipschitz constant $M$) for $X$. For any $\gamma\in (0,1)$, the limit $d_{\gamma (n-1)} \to 0$ converges in probability, where $d_{\gamma (n-1)}$ is the $\gamma$ quantile of finite death times.

%   $$\lim_{n\to \infty}\mathbb{P}\left(d_{\gamma(n-1)} \leq  \left( \frac{\log n}{n}\right)^{1/m}\right)=1.$$ %almost surely, with $C(f,\gamma)$ a constant dependent on $f$ and $\gamma$.
%In particular, $$d_{\gamma(n-1)}<C(f,\gamma) \left( \frac{\log n}{n}\right)^{1/m}$$ almost surely, with $C(f,\gamma)$ a constant dependent on $f$ and $\gamma$.
%\end{theorem}

\begin{proof}
  %  Fix $\gamma\in [0,1)$.
  %Note that $d_{\omega_n}<d_{\gamma n}$ for sufficiently large $n$ so it is sufficient to prove the theorem
  Choose $c\in(0,1)$ such that for $L=f^{-1}[c, \infty)$ we have $\P(X\notin L)<(1-\gamma)/3$.

  Consider the sequence $r_n:= \left(\frac{2^{m+1}}{c b_m n(1-\gamma)}\right)^{1/m}$
  %so that $\frac{2^m}{c b_m r_n^m} =n(1-\gamma)/2.$
  (where $b_m$ is the volume of the unit ball in $\R^m$).
  Note that for $r_n \in (0, c/M)$ for large enough $n$. Restrict the sequence to $n$ where this holds.

  By \cref{lem:covernumber} there exists a cover of $L$ by $l_n$ balls $\{B(y_i, r_n)\mid i=1, 2, \ldots l_n\}$, with
  $$
  l_n\leq \frac{2^m}{c b_m r_n^m} =n(1-\gamma)/2 .
  %\frac{2^{m}}{c b_m r_n^m}=\frac{2^m n}{\log n}.
  $$
  %and
  %$\P(X\in B(y_i, r_n))>\frac{c b_m r_n^m}{2} = \frac{\log n}{n}$ for each $i$.
  %We have $d_{\gamma(n-1)}<r_n$ whenever $\cup_i B(x_i, r_n)$ has less than $n(1-\gamma)$ connected components.
  %  If $Z_i$ is the event that there are no samples in $B(y_i, r_n)$. By \cref{lem:samplescover}
  % $$
  % \P(\cup Z_i) \leq \frac{2^m n}{\log n} e^{-n \frac{\log n}{n}}= \frac{2^m}{\log n} .$$
  %This implies that as $n\to \infty$ we have at least one sample in each of the $B(y_i, r_n)$ with probability $1$.

  All points with the same ball $B(y_i, r_n)$ will be connected in the Rips complex with radius $2r_n$. This implies that the number of connected components when restricting to points inside $L$ will be bounded above by $l_n \leq n(1-\gamma)/2 $.

  %Let $p=\P(X\notin L)$. Note that for all $x\in L$ we have
  % $f(x)>c$. Since $f$ has only finitely many local maxima we have
  % that the number of connected components in $L$ is finite.
  %Denote these connected components by $L_1, L_2, \ldots L_K$.
  % %Consider also the sets $L_i^\varepsilon=\{y\in \R^m| \text{ there
  % exists }x\in L_i \text{ such that } \|x-y\|<\varepsilon\}$
  %We also know that the volumes of the $L_i$ are all bounded.

  Let $Y(L,n)$ denote the random variable for the number of our $n$ sample points that are not in $L$.
  In the case we would have an upper bound on the number of connected components in $\cup_j B(x_j, r_n)$ by $
  Y(L,n)+n(1-\gamma)/2$.
  We have $d_{\omega_n}<r_n$ whenever $\cup_i B(x_i, \epsilon_n)$ has less than $n-\omega_n$ connected components.
  We will want to show that
  $$
  Y(L,n)+n(1-\gamma)/2<n-\omega_n
  $$
  for large $n$. As $n-\omega_n>n(1-\gamma)$, it is sufficient to show $Y(L,n)<n(1-\gamma)/2$ for large $n$.

  From the construction of $L$ we have $Y(L,n)$ is a binomial distribution with mean $np$ and variance $np(1-p)$, with $p<(1-\gamma)/3$. We can use Chebyshev's inequality to bound $\P(Y(L,n)>n(1-\gamma)/2)$.
  \begin{align*}
    \P(Y(L,n)>n(1-\gamma)/2) & \leq
    \P(|Y(L,n)-np|>n(1-\gamma)/2-n(1-\gamma)/) \\
    & = \P(|Y(L,n)-np| >n(1-\gamma)/6)                    \\
    & \leq \frac{Var(Y)}{(n(1-\gamma)/6)^2}             \\
    & \leq \frac{36np(1-p)}{n^2(1-\gamma)^2}             \\
    %&\leq \frac{9(1-\gamma)(1-p)/2}{n(1-\gamma)^2}\\
    & \leq \frac{12}{n(1-\gamma)}
  \end{align*}
  This clearly goes to zero as $n$ goes to infinity.

  If there are less than $n(1-\gamma)$ connected components in $\cup_j B(x_j, r_n)$ then $d_{\omega_n}\leq r_n=: C(f, \gamma) n^{-1/m}.$
  Thus we have proved $\P(d_{\omega_n} > C(f, \gamma) n^{-1/m}
  )\leq \frac{12}{n(1-\gamma)} \to 0$ as $n\to \infty$
  %as $n\to \infty$.
  %As $r_n \to 0$ as $n\to \infty$ we have $d_{\gamma(n-1)} \to 0$ almost surely.
  %In particular, $$d_{\omega_n}\leq C(f,\gamma) \left( \frac{\log n}{n}\right)^{1/m}$$ almost surely,
  with $C(f,\gamma)$ a constant dependent on $f$ and $\gamma$ (and independent of $n$).
\end{proof}

\subsection{Lower bounding the quantile death times $d_{\gamma(n-1)}$}

Given the general upper bound result in Subsection \ref{sec:upperboundquartile} we will now focus showing limit theorems when our sequence of death times is a quantile $\gamma(n-1)$ for some $\gamma\in (0,1)$. Again we will pose our theorems in terms of $d_{\omega_n}$ where $\omega_n$ is controlled appropriately --- here $\omega_n\geq \gamma n$ for large enough $n$.

%Main mathematical results with proofs (possibly relegated to the Appendices).

%A consequence of the below theorems is that $nd_{\gamma(n-1)}\to
% \infty$ %whenever $d_*$ is chosen as the any fixed quantile of
% the death times.
%This holds for very general distributions. We only require a global
% upper bound on the pdf and the dimension of the space to be at least $2$.

\begin{theorem}\label{thm:nd_f(n)}
  Let $d_k$ denote the $k$th smallest death time of the $0$-homology for the Rips filtrations of a set of $n$ points sampled from probability distribution $X$ over $\R^m$ such that the probability density function $f$ exists and is bounded above. Note that there are $n-1$ finite death times. Fix a $\gamma\in (0,1)$ and choose any sequence of integers $\{\omega_n\}$ such that $\gamma n \leq \omega_n <n$ for all sufficiently large $n$. Then for any sequence $\{g_n\}$, such that $g_n\to 0$ as $n \to \infty$, we have
  $$\lim_{n\to \infty}\mathbb{P}\left(d_{\omega_n} \geq n^{-1/m}g_n\right)=1.$$
\end{theorem}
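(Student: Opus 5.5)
We will use the standard fact that the degree-$0$ death times of the Rips filtration are exactly the edge lengths of a Euclidean minimum spanning tree (MST) on $S_n$, up to the fixed convention relating Rips diameter to scale. So $d_{\omega_n} < r$ forces at least $\omega_n$ of the $n-1$ MST edges to have length $< r$. Equivalently, the graph $G_r$ on $S_n$ (edges between points at distance $<r$) has at most $n-\omega_n \le (1-\gamma)n$ connected components.

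Now fix $r = r_n := n^{-1/m}g_n$. Every point isolated in $G_{r_n}$ is its own component. Hence if $d_{\omega_n}<r_n$, the number $I_n$ of isolated points satisfies $I_n\le(1-\gamma)n$, and the number $N_n := n - I_n$ of non-isolated points satisfies $N_n \ge \gamma n$. It therefore suffices to show $\mathbb{P}(N_n\ge \gamma n)\to 0$.

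We bound $N_n$ by counting edges. Let $E_n$ be the number of pairs $\{i,j\}$ with $\|x_i-x_j\|<r_n$. Each non-isolated point lies in at least one such pair, so $N_n\le 2E_n$. Let $\|f\|_\infty\le F$. For independent $X,Y$ with density $f$,
\[
\mathbb{P}(\|X-Y\|<r)=\int f(x)\int_{B(x,r)} f(y)\,dy\,dx\le F b_m r^m .
\]
Therefore
\[
\mathbb{E}[E_n]\le \binom{n}{2}F b_m r_n^m\le \tfrac12 n^2 F b_m n^{-1} g_n^m=\tfrac12 F b_m\, n g_n^m .
\]
By Markov's inequality,
\[
\mathbb{P}(N_n\ge\gamma n)\le \mathbb{P}(E_n\ge \gamma n/2)\le \frac{F b_m g_n^m}{\gamma}\to 0 ,
\]
since $g_n\to 0$. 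Then $\mathbb{P}(d_{\omega_n}<n^{-1/m}g_n)\to0$, which gives the claim.

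The bounded density is used only in the pair-probability estimate. No Lipschitz or tail assumption is needed, and $\omega_n<n$ just ensures that $d_{\omega_n}$ is finite.

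The main point requiring care is the MST/component correspondence: the number of death times below $r$ equals $n$ minus the number of components of $G_r$. We should also be careful with the factor-$2$ convention, since whether Rips uses diameter $r$ or radius $r/2$ changes only the constant $b_m$ by $2^m$, which is harmless. Strict versus non-strict inequalities are likewise irrelevant, because the event $\|X-Y\|=r$ has probability zero for a density.
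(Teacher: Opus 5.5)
Your proof is correct and is essentially the paper's argument: you bound the deficit $n-C$ in the number of components at scale $r_n=n^{-1/m}g_n$ by the number of short edges, bound the expected edge count by $\binom{n}{2}\|f\|_\infty b_m r_n^m = O(n g_n^m)$ using the bounded density, and finish with Markov's inequality. The only difference is that you go through isolated points and get $n-C\le 2E_n$, whereas the paper uses directly that each edge lowers the component count by at most one ($n-C\le E_n$); this costs only a harmless factor of $2$, and your treatment of the diameter-versus-radius constant is more careful than the paper's.
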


\begin{proof}
  %It is sufficient for there to exist a function $r(n)$ (with and $r(n)\to \infty$ as $n\to \infty$) such that $\mathbb{P}(d_{\omega_n}\geq \frac{r(n)}{n}) \to 1$ as $n\to \infty$.
  For any sequence of positive numbers, $\{r_n\}$, let $C(n, r_n)$ denote the number of components of $\cup_x B(x, r_n)$ (where $B(x,r_n)$ denotes the ball of radius $r_n$ centered at $x$). Each death time corresponds to the reduction by one of the number of connected components. This implies that $d_{\omega_n} > r_n$ occurs when $C(n, r_n) > n - \omega_n$.

  Now $C(n, r_n)$ is a random variable with range contained in $[1,n]$. This implies we can the upper bound the expectation by
  \begin{align*}
  \E[C(n, r_n)] &\leq n\left(1-\mathbb{P}\left(C(n, r_n )\leq n-\omega_n)\right)\right)+ (n-\omega_n)(\mathbb{P}(C(n,r_n) \leq (n- \omega_n))\\
    &= n - \omega_n \mathbb{P}(C(n,r_n)\leq n-\omega_n).
  \end{align*}
  Rearranging we get
  $$
  \mathbb{P}(C(n,r_n)\leq  n-\omega_n)\leq \frac{n-\E[C(n,r_n)]}{\omega_n}= \frac{\E[n-C(n,r_n)]}{\omega_n}.
  $$
  % \begin{align*}
  %     \E[X] &\leq N\mathbb{P}(X\geq (N-\lambda)) + (N-\lambda)(1-\mathbb{P}(X\geq (N- \lambda)))= N-\lambda + \lambda \mathbb{P}(X\geq N-\lambda)
  %\end{align*}
  %and hence $$\mathbb{P}(X\geq N-\lambda)\geq \frac{\E[X]-N +\lambda}{\lambda}.$$
  % Since the number of components can only be an integer between $1$ and $n$,
  %\begin{align*}
  %   \Mathbb{E}\left[C\left(n, \frac{r(n)}{n}\right)> n-\omega_n\right]&\leq n\mathbb{P}\left(C\left(n, \frac{r(n)}{n}\right)> n-\omega_n\right) + (1-\mathbb{P}\left(C\left(n, \frac{r(n)}{n}\right)> n-\omega_n\right) (n-\omega_n)\\
  %     &=(n-\omega_n)+ \mathbb{P}\left(C\left(n, \frac{r(n)}{n}\right)> n-\omega_n\right)\omega_n
  %\end{align*}
  %and thus
  %  $$
  %  \mathbb{P}\left(C\left(n, \frac{r(n)}{n}\right)\leq n-\omega_n\right) \leq \frac{\mathbb{E}[C(n, \frac{r(n)}{n})]-n}{\omega_n}.
  %  $$
  % It is thus sufficient to show that there exists a suitable sequence $r(n)$ such that $r(n)\to \infty$ as $n\to \infty$, and
  % $$
  % \lim_{n\to \infty}\frac{\mathbb{E}[C(n, \frac{r(n)}{n})]-n}{\omega_n} = 0.
  % $$
  A useful upper bound for $n-C(n,r_n)$ is $E(n,r_n)$, where $E(n,r_n)$ is the number of edges in the graph connecting any two points are less than $2r$ apart. This inequality is because every edge we add will either reduce the number of connected components by $1$ or keeps the number of connected components unchanged. Coupling these random variables we have the bound $\mathbb{E}[n-C(n, r)]\leq \mathbb{E}\left[E\left(n,r\right)\right]$. The expected number of edges is much easier to calculate than the expected number of components, with
  \begin{align*}
    \mathbb{E}\left[E\left(n,r_n\right)\right]
    & =\frac{n(n-1)}{2}\mathbb{P}(\lVert x-y \rVert<2r_n\mid x,y \sim X)
    \leq \frac{n(n-1)}{2} Mb_m r_n^m %             \leq \frac{K \pi}{2}r(n)^2
  \end{align*}
  where $M$ is the global upper bound on the probability density function of $X$ and $b_m$ is the volume of the unit ball in $\R^m$.

  Consider the sequence $\left\{r_n=n^{-1/m}g_n\right\}$.
  %such that $n^2r_n^m=\log(\omega_n)$.
  %that is $$r_n=\left(\frac{\log(\omega_n)}{n^2}\right)^{1/m}.$$
  % =\sqrt{2\log (\omega (n))}$ (noting $\lim_{n\to \infty} r(n)= \infty$ as $\lim_{n\to \infty} \omega_n=\infty$) we get
  % $\mathbb{E}\left[C\left(n, \frac{r(n)}{n}\right)\right]\geq n- K \pi\log(\omega_n)$ and thus
  Then
  $$\mathbb{E}[n-C(n, r_n)]\leq \frac{n(n-1)}{2} Mb_m r_n^m \leq \frac{Mb_m}{2} n g_n^m$$
  and so
  $$\mathbb{P}(C(n,r_n)\leq  n-\omega_n)\leq \frac{Mb_m n g_n^m}{2\omega_n}\leq \frac{Mb_m g_n^m}{2\gamma}$$
  which limits to $0$ as $n\to \infty$. This implies that $\lim_{n\to \infty}\mathbb{P}(d_{\omega_n}\geq r_n)=1$.
\end{proof}

\subsection{Admissible bandwidths and rates of convergence}

We can combine Theorem \ref{thm:d_alphan to 0} with Theorem \ref{thm:nd_f(n)}  to show that $d_{\gamma(n-1)}$ is an admissible bandwidth (Definition \ref{def:admissible_bandwidth}).

\begin{theorem}\label{thm:consistency}
  Let $f:\R^m \to \R$ be a density function that satisfies A1. If there exists $0<\gamma_1< \gamma_2<1$ such that $\omega_n \in (\gamma_1 n, \gamma_2 n)$ for all sufficiently large $n$ then $d_{\omega_n}$ is an admissible bandwidth. %for all $\delta\in (0,1)$.
\end{theorem}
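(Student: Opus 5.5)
We verify the two conditions of Definition~\ref{def:admissible_bandwidth} separately: the upper condition $h_n = d_{\omega_n}\to 0$ comes from Theorem~\ref{thm:d_alphan to 0} applied with $\gamma=\gamma_2$, and the lower condition $n (d_{\omega_n})^{m/2}\to\infty$ comes from Theorem~\ref{thm:nd_f(n)} applied with $\gamma=\gamma_1$. First we check the hypotheses. Under A1, $f$ is twice differentiable with bounded second derivatives. We would argue that this also makes $f$ bounded and Lipschitz on the relevant region. The first derivatives are bounded because a square-integrable density with bounded second derivatives cannot have unbounded gradient; if needed, we simply record this as part of the standing assumptions. Boundedness of $f$ then follows in the same way. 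So Theorem~\ref{thm:d_alphan to 0} applies because $\omega_n\le\gamma_2 n$, and Theorem~\ref{thm:nd_f(n)} applies because $\omega_n\ge\gamma_1 n$.

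\textbf{Upper bound.} Theorem~\ref{thm:d_alphan to 0} gives
\[
\P\bigl(d_{\omega_n} > C(f,\gamma_2)n^{-1/m}\bigr)\le \frac{12}{n(1-\gamma_2)} .
\]
Since $C(f,\gamma_2)n^{-1/m}\to 0$, this shows $d_{\omega_n}\to 0$ in probability.

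\textbf{Lower bound.} Choose $g_n=n^{-\epsilon}$ for a fixed small $\epsilon>0$; any $g_n\to0$ would do. Theorem~\ref{thm:nd_f(n)} gives $d_{\omega_n}\ge n^{-1/m}g_n$ with probability tending to one. On that event,
\[
n\,d_{\omega_n}^{m/2}\;\ge\; n^{1/2}g_n^{m/2}\;=\;n^{1/2-\epsilon m/2}\to\infty
\]
once $\epsilon<1/m$. Note that only the lower rate matters here, and it is in fact far stronger than needed: $n h^{m/2}\to\infty$ only requires $h\gg n^{-2/m}$, while we have $h\approx n^{-1/m}$.

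\textbf{Main obstacle.} Definition~\ref{def:admissible_bandwidth} asks for \emph{almost sure} convergence, whereas both theorems give only convergence in probability. For the upper bound we would try Borel--Cantelli. The failure probabilities are $O(1/n)$, which is not summable, so we would pass to a subsequence such as $n=k^2$ and then interpolate. Alternatively, we would sharpen the Chebyshev step in Theorem~\ref{thm:d_alphan to 0} to a Chernoff/Hoeffding bound for the binomial $Y(L,n)$, giving exponentially small and hence summable failure probabilities. For the lower bound we would choose $g_n=n^{-\epsilon}$ so that the failure bound $Mb_m g_n^m/(2\gamma_1)=O(n^{-\epsilon m})$ is again handled by a subsequence argument or by a second-moment/Chernoff refinement of the edge count.

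If full almost-sure control proves too delicate, the fallback is to interpret admissibility in probability. MISE consistency is an expectation over the sample, so it still follows by splitting on the high-probability event and using the boundedness of the integrated squared error on its complement. We regard this almost-sure versus in-probability gap as the step requiring the most care. The rest is direct combination of the two earlier theorems.
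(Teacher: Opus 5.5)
Your proposal is essentially the paper's proof: it combines Theorem~\ref{thm:d_alphan to 0} with $\gamma=\gamma_2$ to get $d_{\omega_n}\to 0$, and Theorem~\ref{thm:nd_f(n)} with $\gamma=\gamma_1$ to get $n d_{\omega_n}^{m/2}\to\infty$, after noting that A1 makes $f$ bounded and Lipschitz (the paper takes $g_n=(\log n)^{-1/m}$ rather than $n^{-\epsilon}$). The almost-sure versus in-probability gap you flag is genuine, since the paper's proof just asserts ``with probability $1$'', and your fixes close it. Hoeffding makes the tail for $Y(L,n)$ exponentially small, and Chebyshev on the edge count, whose variance is $O(n^2r_n^m+n^3r_n^{2m})$, gives a tail $O(n^{-1-\epsilon m})$; this is summable because your $g_n$ decays polynomially (the paper's $g_n$ would give $O(1/(n\log n))$), so Borel--Cantelli applies. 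Your fallback is therefore not needed, and it also wrongly treats the integrated squared error as bounded, which fails when the random bandwidth is tiny.
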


\begin{proof}
  Note that the assumptions of A1 imply that $f$ is Lipschitz and bounded.

  Recall that for $d_{\omega_n}$ to be an admissible bandwidth we require $d_{\omega_n} \to 0$ and $n(d_{\omega_n})^{m/2} \to \infty$ as $n\to \infty$ with probability $1$.
  %From Theorem \ref{thm:d_alphan to 0}  $d_{\omega_n} \to 0$ as $n\to \infty$ with probability $1$. As $\delta>0$ is fixed, this also holds for $(d_{\omega_n})^\delta$.
  As $n^{-1/m} \to 0$ as $n\to \infty$ we can conclude from Theorem \ref{thm:d_alphan to 0} that $d_{\omega_n}\to 0$ as $n\to\infty$ (with probability $1$).

  From Theorem \ref{thm:nd_f(n)}, with $g_n=(\log n)^{-1/m}$, we get $\lim_{n\to \infty}\mathbb{P}\left(d_{\omega_n} \geq (n\log n)^{-1/m}\right)=1$ and thus
  $$ \lim_{n\to \infty}\mathbb{P}\left(nd_{\omega_n}^{m/2} \geq n (n\log n)^{-1/2}\right)=1.$$
  Since $n (n\log n)^{-1/2}\to \infty$ as $n\to \infty$ we have in turn  $n(d_{\omega_n})^{m/2}\to \infty$ as $n\to \infty$ with probability $1$.

\end{proof}

Thus, setting the bandwidth to be $\bm{H} = d_{\gamma(n-1)} \bm{I}_m$, for $0<\gamma<1$, leads to a consistent kernel density estimator with MISE \eqref{eq:mise} converging to zero as $n\to \infty$.

Ultimately we are not really interested in optimizing the MISE, as we care more about the tails of the distribution where $f$ is relatively small. However, proving consistency and bounds on mean square error are still useful as a theoretical justification of bandwidth selection.

An asymptotic approximation of the MISE is given by \citep{chacon2018multivariate}
\begin{equation}\label{eq:amse}
  \text{AMISE}(\hat{f}(\cdot, \bm{H})) = n^{-1}|\bm{H}|^{-1/2} R(K) + \frac{1}{4} \mu_2(K)^2 \int \left\{\text{tr}(\bm{H} D^2 f(\bm{u}))\right\}^2 d\bm{u},
\end{equation}
where $R(K) = \int K(\bm{u})^2 d\bm{u}$ and $D^2 f$ is the Hessian of $f$.

If $\bm{H} = h_n \bm{I}_m$ for some $h_n > 0$, then we can write
\begin{equation}\label{eq:amse_iso}
  \text{AMISE}(\hat{f}(\cdot, h_n \bm{I}_m)) = n^{-1} h_n^{-m/2} R(K) + \frac{1}{4} \mu_2(K)^2 h_n^2 \int \left\{\text{tr}(D^2 f(\bm{u}))\right\}^2 d\bm{u}.
\end{equation}

The minimal MISE rate is achieved when $h_n$ is of order $n^{-2/(m+4)}$, and the corresponding MISE is of order $n^{-4/(m+4)}$.

From the upper and lower bounds in Theorem  \ref{thm:d_alphan to 0} and Theorem \ref{thm:nd_f(n)}, asymptotically we expect, with high probability, that $d_{\gamma(n-1)}$ will have a rate close to $n^{-1/m}$. A rate of $n^{-1/m}$ would correspond to AMISE of order $O(n^{-1/2}+n^{-2/m})$.

%We can also use the upper and lower bounds to infer an upper bound on the rate of convergence of the kernel regression estimator using the bandwidth $d_{\omega}$ for suitable sequences $\omega_n$. We can prove that we almost get the classical theoretically optimal bandwidth choice for kernel regression in Euclidean space.

%\begin{corollary}\label{cor:convergence_rate}
%  Let $f\in C^{p}(\R^n, \R)$ be a probability density function for $X$ that satisfies A1. Choose any function $\hat{g}(n)$ with $\hat{g}(n)\to \infty$ as $n\to \infty$.

%  If there exists $0<\gamma_1< \gamma_2<1$ such that $\omega_n \in (\gamma_1 n, \gamma_2 n)$, for all sufficiently large $n$, then the symmetric kernel regression estimator with bandwidth $d_{\omega_n}^{\frac{m}{m+2p}}$ has an asymptotic convergence rate bounded above by $n^{\frac{-2}{m+2p}}\hat{g}(n)$.
%\end{corollary}

%\begin{proof}
%From
%$$\text{AMISE}(\hat{f}(\cdot, h_n \bm{I}_m)) = C_1 n^{-1} h_n^{-m/2} + C_2 h_n^2 $$
%\end{proof}

  There is evidence that $d_*$ (the choice made in the original lookout algorithm) is an admissible bandwidth for samples drawn from a Gaussian distribution, but that it is not an admissible bandwidth when drawn from distributions with fat tails. Given  Theorem \ref{thm:d_n-1 Gaussian} this seems likely but technically challenging. It is not the focus of the new version of lookout, and thus we will leave this proof as an open problem. Part of the challenge is we have no formula $\omega_n$ for where the largest gap in the death values is likely to occur.

\section{Lookout algorithm v2}\label{sec:modified}

In light of the preceding theoretical considerations, we propose three modifications to the lookout algorithm, described in this section.

\subsection{Standardization}

The \emph{orthogonalized Gnanadesikan-Kettenring estimator} provides an iterative robust estimate of $\bm{\Sigma}$, the covariance matrix of $\{\bm{y}_1,\dots,\bm{y}_n\}$ \citep{Maronna2002-qn}. Let
\begin{equation}\label{eq:ogk}
  \hat{\bm{\Sigma}}^{-1} = \bm{U}' \bm{U}
\end{equation}
denote the Cholesky decomposition of its inverse. Then, the data can be scaled by $\bm{U}$, giving $\bm{z}_i = \bm{U}(\bm{y}_i - \bm{m})$, where $\bm{m}$ is the vector of medians of $\{\bm{y}_1,\dots,\bm{y}_n\}$. The covariance matrix of the scaled data $\{\bm{z}_1,\dots,\bm{z}_n\}$ is (approximately) the identity matrix. This can be thought of as both scaling and rotating the data; it is the multivariate equivalent of computing robust z-scores.

We prefer this to the min-max scaling used in the original lookout algorithm, as it is robust to outliers and removes the correlations between variables, leading to more efficient kernel density estimates.

\subsection{Bandwidth matrix selection}

Let $d_1,\dots,d_{n-1}$ denote the finite (ordered) Rips death diameters, computed from the scaled and rotated data $\bm{z}_1,\dots,\bm{z}_n$, and let $d_{\gamma(n-1)}$ denote the $\gamma$-quantile computed from $\{d_1,\dots,d_{n-1}\}$ \citep{HF96}, for a large value of $\gamma \in (0,1)$.

A kernel density estimator is applied to the scaled data $\{\bm{z}_1,\dots,\bm{z}_n\}$ using the bandwidth matrix $\bm{H} = d_{\gamma(n-1)}\bm{I}_m$, where $m$ is the dimension of the data. Under reasonable assumptions, the density estimator is consistent under Theorem \ref{thm:consistency}. When the proportion of anomalies in the data is small, we would expect this estimate to be close to the density of the non-anomalous observations.

Further, the kernels are now aligned with the data, as both have covariance matrices that are scalar multiples of the identity matrix. While alignment is not necessary for consistency, it does improve the efficiency of the kernel density estimates \citep{chacon2018multivariate}.

An equivalent density estimate of the original data $\{\bm{y}_1,\dots,\bm{y}_n\}$ could be obtained using $\bm{H} = d_{\gamma(n-1)}\hat{\bm{\Sigma}}$, although for our purposes, working with the scaled data is sufficient.

\subsection{Bounds for kernel density estimates}

In the following lemma we show that the kernel density estimates at the observations, and their negative logarithms, are bounded by constants that depend only on the dimension $m$ and the sample size $n$.

\begin{lemma}\label{lemma:boundsforkde}
  Let $\{\bm{y}_1, \dots, \bm{y}_n\}$ be $n$ data points in $\R^m$, with kernel density estimate given by
  $$
  \hat{f}(\bm{y}) = \frac{1}{n} \sum_{i = 1}^n |\bm{H}|^{-1/2}K\left(\bm{H}^{-1/2}(\bm{y} - \bm{y}_i)\right),
  $$
  where $K(\bm{u})$ is a multivariate kernel function that is bounded below by 0 and above by $K(\bm{0}) = K_0$, and $\bm{H}$ is an $m\times m$ symmetric positive-definite bandwidth matrix. Then for all $j \in \{1, \dots, n\}$, the kernel density estimates at the observations, $\hat{f}(\bm{y}_j)$, and their negative logarithms, $-\log \hat{f}(\bm{y}_j)$, are bounded by constants that depend only on $m$ and $n$.
\end{lemma}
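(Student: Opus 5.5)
The plan is to evaluate the sum at an observation $\bm{y}_j$ and treat the term $i=j$ separately from the terms $i\neq j$. Write $h=|\bm{H}|^{-1/2}$, so that
$$
\hat{f}(\bm{y}_j) = \frac{1}{n}\Big( h K_0 + \sum_{i\neq j} h K\big(\bm{H}^{-1/2}(\bm{y}_j-\bm{y}_i)\big)\Big).
$$
The self-term $h K_0$ does not depend on the data. Each of the other $n-1$ terms lies in $[0, hK_0]$ because $0\le K\le K_0$. This gives the two-sided bound
$$
\frac{hK_0}{n} \;\le\; \hat{f}(\bm{y}_j) \;\le\; hK_0 .
$$
The lower bound comes from dropping the nonnegative cross terms, and the upper bound from taking every term at its maximum. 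Taking negative logarithms, which is monotone decreasing, we get
$$
-\log(hK_0) \;\le\; -\log \hat{f}(\bm{y}_j) \;\le\; \log n - \log(hK_0).
$$
So the range of the surprisals has width exactly $\log n$, whatever the configuration of the data.

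The next step is to make the constants depend only on $m$ and $n$, as the statement asserts. Here I would specialise to the setting of the modified algorithm. First, take $\bm{H}=h_n\bm{I}_m$, so $h=h_n^{-m/2}$. Second, take $K$ to be a fixed kernel such as the standard Gaussian, so that $K_0=(2\pi)^{-m/2}$ is a function of $m$ alone. The bounds are then $h_n^{-m/2}(2\pi)^{-m/2}/n$ and $h_n^{-m/2}(2\pi)^{-m/2}$.

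The main obstacle is the factor $|\bm{H}|^{-1/2}$. It is data-dependent through $d_{\gamma(n-1)}$, and it is not controlled by $m$ and $n$ unless we either (a) read ``constants'' as conditional on the bandwidth, or (b) bound $d_{\gamma(n-1)}$ in terms of $n$. I would pursue (a) as the honest reading: the bounds are deterministic given $\bm{H}$, with the explicit dependence on $m$ and $n$ shown above. I would then remark that, by Theorem~\ref{thm:d_alphan to 0} and Theorem~\ref{thm:nd_f(n)}, with high probability $h_n$ lies between $n^{-1/m}g_n$ and $C(f,\gamma)n^{-1/m}$. This yields high-probability bounds whose $n$-dependence is explicit, roughly $\hat f(\bm{y}_j)\lesssim n^{1/2}$ up to slowly varying factors.

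Finally, I would note the consequence used in step 3 of the modified algorithm. Since the surprisals $-\log \hat f(\bm{y}_j)$ are bounded above, the limiting extreme value distribution in Theorem~\ref{thm:FisherTippett} has a finite upper endpoint. This forces $\xi<0$, which justifies constraining the GPD shape parameter to be non-positive.
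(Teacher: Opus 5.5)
This is the paper's proof. Isolating the self-term $K(\bm{0})=K_0$ and bounding every summand by $K_0$ gives $K_0|\bm{H}|^{-1/2}/n \le \hat f(\bm{y}_j)\le K_0|\bm{H}|^{-1/2}$, and your reading (a) is right, because the paper's constants also keep the factor $|\bm{H}|^{-1/2}$ and its only remark is that $K_0$ may depend on $m$, so the lemma is implicitly conditional on the bandwidth. One small correction to your closing remark, which repeats a claim made in the paper: a finite upper endpoint forces only $\xi\le 0$, not $\xi<0$, since Gumbel limits remain possible, and that is exactly why the algorithm constrains the shape parameter to be non-positive.
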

\begin{proof}
  $$
  \frac{K_0|\bm{H}|^{-1/2}}{n} \leq\hat{f}(\bm{y}_j) =
  \frac{1}{n} \sum_{i = 1}^n |\bm{H}|^{-1/2}K\left(\bm{H}^{-1/2}(\bm{y}_j - \bm{y}_i)\right) \leq K_0|\bm{H}|^{-1/2}\, .
  $$
  The constant $K_0$ may depend on $m$.
\end{proof}

For example, if we use the multivariate Gaussian kernel $K(\bm{u}) = \frac{1}{(2 \pi)^{m/2}} \exp \left( -\frac12 \|\bm{u}\| \right)$, then $K_0 = \frac{1}{(2 \pi)^{m/2}}$. If we use the multivariate Epanechnikov kernel $K(\bm{u}) = \frac{m+2}{2b_m} \left( 1 - \|\bm{u}\| \right)_{+}$, where $b_m$ is the volume of the unit sphere in $\R^m$, and $u_+ = \max(0, u)$, then $K_0 = (m+2)/(2b_m)$.

Under Theorem \ref{thm:FisherTippett}, the limiting distribution of the extremes of $-\log\hat{f}(\bm{y}_j)$ is a Weibull distribution, corresponding to a Generalized Extreme Value distribution with $\xi<0$. Thus, when we fit a Generalized Pareto Distribution (GPD) to the largest $1-\beta$ of $\{-\log\hat{f}(\bm{y}_j)\}_{j=1}^n$, we can constrain the shape parameter to be non-positive, which improves the stability of the GPD fitting.

\subsection{Modified lookout algorithm}

We are now in a position to describe the modified lookout algorithm.

\begin{algorithm}
  \caption{Modified lookout algorithm}\label{alg:modifiedlookout}
  \begin{algorithmic}[1]
    \Require Data $\{\bm{y}_1, \dots, \bm{y}_n\}$, proportion $\beta \in (0,1)$, threshold $\alpha \in (0,1)$, and quantile level $\gamma \in (0,1)$.
    \State Let $\bm{Y}$ be the $m\times n$ data matrix with rows $\bm{y}_1, \dots, \bm{y}_n$.
    \State Optional scaling: Calculate the eigendecomposition \eqref{eq:ogk}, and rotate and scale the data using $\bm{U}$, giving $\bm{Z} = \bm{U}\bm{Y}$, with scaled observations $\bm{z}_1, \dots, \bm{z}_n$ comprising the rows of $\bm{Z}$.
    \State Compute the persistence homology barcode of the data cloud $\bm{Z}$ (set $\bm{Z} = \bm{Y}$ if no scaling is applied) for dimension zero using the Vietoris-Rips diameter \citep{ghrist2008barcodes}, and obtain the ordered death diameters $\{d_i\}_{i = 1}^n$ for connected components.
    \State Let $d_{\gamma(n-1)} =$ the $\gamma$ sample quantile computed from $\{ d_i \}_{i=1}^n$.
    \State Compute the kernel density estimates at the observations: $f_i = \hat{f}(\bm{y}_i)$, $i=1,\dots,n$, where $\hat{f}$ is given by \eqref{eq:mkde}, $\bm{H} = h_n\bm{I}_m$, and $h_n = d_{\gamma(n-1)}$.
    \State Compute the leave-one-out kde values $f_{-i} = \frac{1}{n-1} \big( n f_i - \bm{H}^{-1/2} K(\bm{0}) \big)$ for $i=1,\dots,n$.
    \State Fit a Generalized Pareto Distribution (GPD) to the largest $1-\beta$ of the surprisals $\{-\log f_i\}_{i=1}^n$, constraining the shape parameter to be non-positive, obtaining estimates of the location, scale and shape parameters $(\mu, \sigma, \xi)$.
    \State Calculate the probability of each observation by applying the estimated GDP to the leave-one-out kde values: $p_i = (1-\beta)P(-\log f_{-i} \mid \hat{\mu}, \hat{\sigma}, \hat{\xi})$.
    \State If $p_i < \alpha$, declare $\bm{y}_i$ anomalous.
  \end{algorithmic}
\end{algorithm}

As with the original lookout algorithm, the values of $p_i$ can also act as anomaly scores, with lower values indicating more anomalous points.

The default values for the parameters are $\alpha = 0.001$, $\beta = 0.90$ and $\gamma = 0.98$. The default values are used in all experiments reported here, unless explicitly stated otherwise. Scaling is also applied by default, unless otherwise stated.

% ---------------------------------------------------------------------------
\section{Comparing new lookout with the older version}\label{sec:comparewitholderversion}
% ---------------------------------------------------------------------------

Next we compare the old and the new versions of lookout. We conduct four experiments that explore different aspects of the updated algorithm, and illustrate the output using a set of showcase examples. All experiments have anomalies and non-anomalies. Experiments 1 and 2 allow the anomalies to move away from the non-anomalous distribution. These two experiments compare the sensitivity of the old and new versions of lookout to identifying anomalies as they become more obvious. Experiments 3 and 4 focus on having anomalies on the boundary of the non-anomalous region. For these two experiments we increase the number of data points and explore the performance of the two algorithm versions for different sample sizes.

% - Show anomaly detection examples using normal distribution and
% gamma distribution, with lookout 2 vs lookout 1

\subsection{Experiment 1: Different anomaly rates for Gamma distribution}

For experiment 1 we consider the $\text{Gamma}(a, r)$ distribution with shape $a$ and rate $r$ and generate data points with different rates $r$. We select $\text{Gamma}(2, 2)$ as the non-anomalous distribution and $\text{Gamma}(2, r)$ as the anomalous distribution where $r \in \{0.1, 0.2, \ldots, 1\}$. Let $Y_i = (Y_{i1}, Y_{i2})$ where $Y_{i1}$ and $Y_{i2}$ are independent with $Y_{ij} \sim \text{Gamma}(2, 2)$ for the non-anomalous distribution and $Y_{ij} \sim \text{Gamma}(2, r)$ for the anomalous distributions. We consider 10 replications of the experiment for each value of $r$. For each replication we generate 500 non-anomalies and 10 anomalies.
\begin{figure}[!ht]
  \centering
  \includegraphics[width=\textwidth]{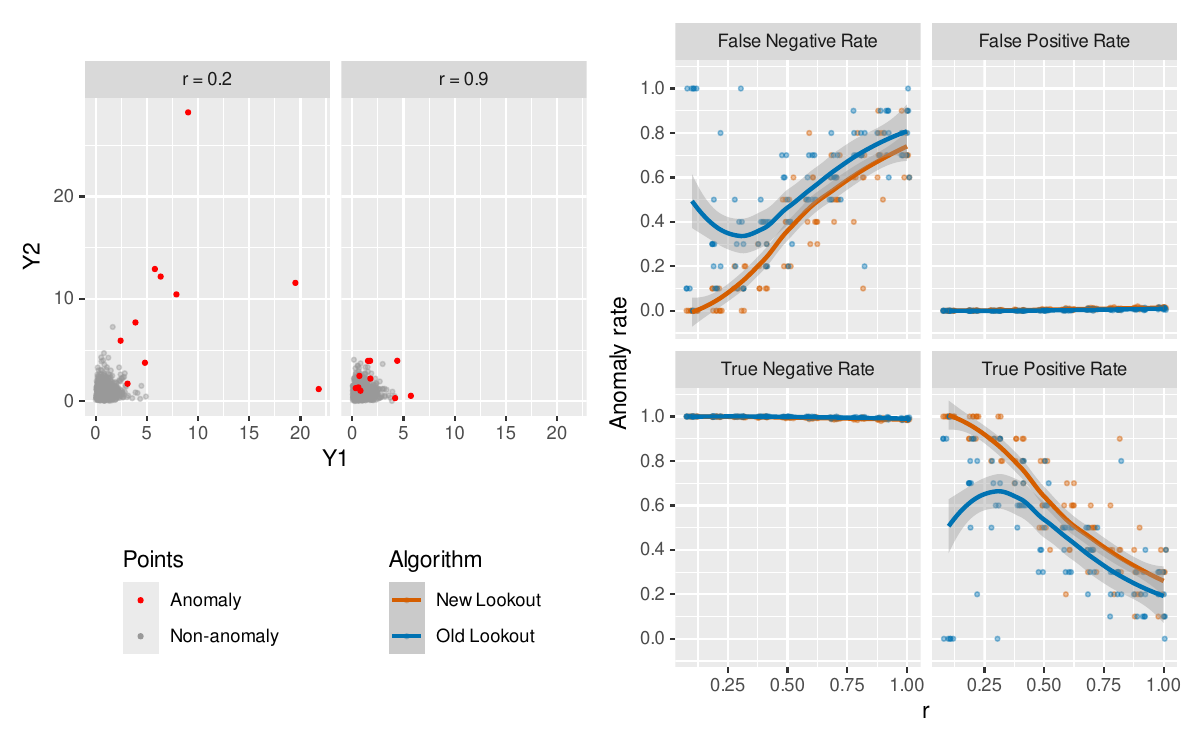}
  \caption{Experiment 1 data and results. Left: data for two iterations of the experiment with $r = 0.2$ and $r = 0.9$. Anomalies with lower values of $r$ are easier to identify. Right: the relative performance of the two algorithms. The points in the right panel are slightly jittered horizontally to avoid overlapping points.}
  \label{fig:exp1GammaRates}
\end{figure}

\cref{fig:exp1GammaRates} shows the data and results of Experiment 1. As the rate $r$ of the anomalous Gamma distribution increases, the anomalies and non-anomalies become more similar. Thus, we expect a decrease in the true positive rate and an increase in the false negative rate as $r$ increases. The anomalies are maximally different from non-anomalies when $r = 0.1$. We see that new lookout performs better on average than the old lookout algorithm across all values of $r$, with the largest differences for the smallest values of $r$.

\subsection{Experiment 2: Different anomaly means for Normal distribution}

For experiment 2 we consider the normal distribution $\mathcal{N}(\mu, 1)$ with mean $\mu$. We let $Y_i = (Y_{i1}, Y_{i2})$ where $Y_{i1}$ and $Y_{i2}$ are independent with $Y_{ij} \sim \mathcal{N}(\mu, 1)$. The non-anomalous distribution has $\mu=0$, while the anomalous distribution has $\mu \in \{ 2.50, 2.75, \dots, 3.75, 4.00\}$. We consider 1000 non-anomalies and 10 anomalies, and we produce 10 replicates for each value of $\mu$.

\cref{fig:exp2NormalRates} shows Experiment 2 data and results. As $\mu$ increases, the anomalies shift away from the non-anomalies. Thus, we expect the true positive rate to increase and the false negative rate to decrease with $\mu$. Here, the two versions of the lookout algorithm are statistically indistinguishable, across all values of $\mu$.

\begin{figure}[!hb]
  \centering
  \includegraphics[width=1\textwidth]{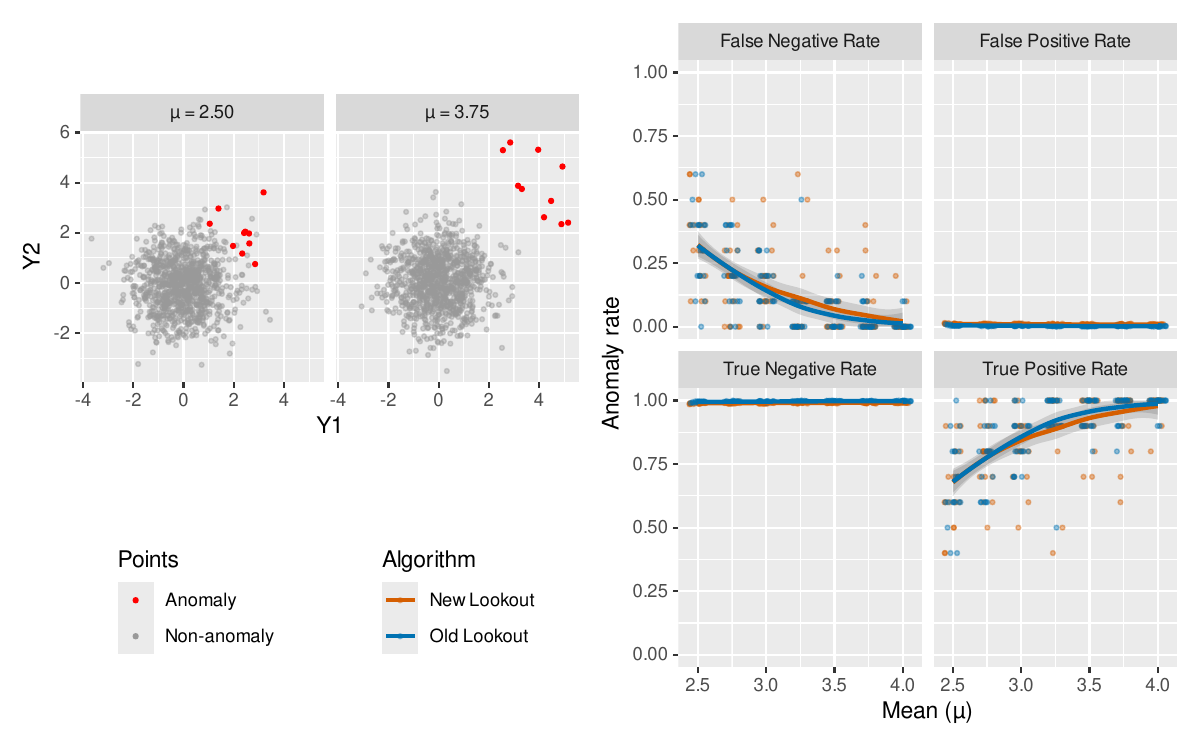}
  \caption{Experiment 2 data and results. Left: data for two iterations of the experiment with $\mu = 2.5$ and $\mu = 3.75$. Anomalies with higher values of $\mu$ are easier to identify. Right: the relative performance of the two algorithms. The points in the right panel are slightly jittered horizontally to avoid overlapping points.}
  \label{fig:exp2NormalRates}
\end{figure}

\subsection[Experiment 3: As N increases - Normal distribution]{Experiment 3: As $N$ increases - Normal distribution}

It is often difficult to distinguish anomalies from legitimate observations because they can appear similar. In such cases, anomalies are not well separated from the non-anomalous points. In this experiment, we explore the efficacy of \textit{lookout} when anomalies are close to regions containing non-anomalous points.

We consider $n$ non-anomalies for $n \in \{1000, 2000, \ldots, 10000\}$ and $0.005 \times n$ anomalies. We let $Y_i = (Y_{i1}, Y_{i2})$ where $Y_{i1}$ and $Y_{i2}$ are independent, with $Y_{ij} \sim \mathcal{N}(\mu, \sigma)$. For the non-anomalies, we set $\mu=0$ and $\sigma=1$. We generate anomalies in a ring just beyond most of the non-anomalous points, by first sampling $\mu_{i1} \sim \mathcal{U}(-\sqrt{2}\times 2.2, \sqrt{2}\times 2.2)$ and letting $\mu_{i2} = \sqrt{2\times2.2^2 - \mu_{i1}^2}$. Then we sample the anomalies $(Y_{i1}, Y_{i2})$ where $Y_{i1} \sim \mathcal{N}(\mu_{i1}, 0.1)$ and $Y_{i2} \sim \mathcal{N}(\mu_{i2}, 0.1)$. We produce 10 replications for each value of $n$.

\cref{fig:exp3NormalIncreasingN} shows the data of a single run and the results of experiment 3. Both new and old lookout give low false positive rates and high true negative rates. However, we see that new lookout has a larger true positive rate and a lower false negative rate compared to old lookout, with statistically significant differences for smaller sample sizes.

\begin{figure}[!hb]
  \centering
  \includegraphics[width=\textwidth]{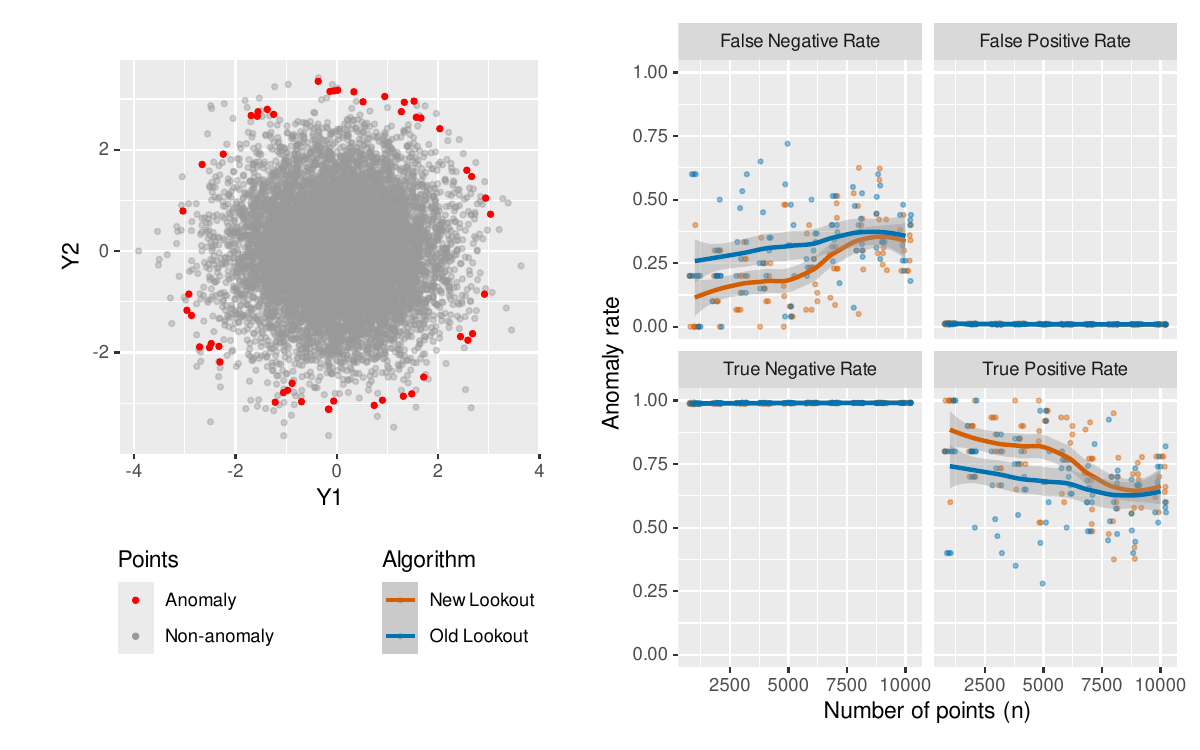}
  \caption{Experiment 3 data and results using Normal distributions. Left: data for $n=10000$. Right: the relative performance of the two algorithms. The points in the right panel are slightly jittered horizontally to avoid overlapping points.}
  \label{fig:exp3NormalIncreasingN}
\end{figure}

\subsection[Experiment 4: As N increases - Gamma distributions]{Experiment 4: As $N$ increases - Gamma distribution}

This experiment is similar to experiment 3, but we set $Y_{ij} \sim \text{Gamma}(2, 2)$ for the non-anomalous points. To obtain anomalous points, we first generate $n$ points $Y_i = (Y_{i1},Y_{i2})$ where $Y_{i1}$ and $Y_{i2}$ are independent, with $Y_ij \sim \text{Gamma}(2.2, 2)$. Then we consider the subset $\{Y_i: \lVert Y_i \rVert > \lVert X\rVert _{0.99} \}$ where $\lVert Y_i \rVert = \sqrt{ Y_{i1}^2 + Y_{i2}^2} $ denotes the distance from the origin to $Y_i$ and $\lVert X\rVert _{0.99}$ denotes the 0.99 quantile of $\left \{\sqrt{ Y_{i1}^2 + Y_{i2}^2} \right \}_{i=1}^n$. From this subset we sample $0.005 \times n$ points and label them as anomalies. Each iteration for a given $n$ is repeated 10 times. \cref{fig:exp4GammaIncreasingN} shows the data from an iteration of the experiment with $n = 10000$ and the results for the new and old version of \textit{lookout}. The two versions of lookout are not signficantly different, other than at the largest sample sizes, where the older version does slightly better.

\begin{figure}[!thb]
  \centering
  \includegraphics[width=\textwidth]{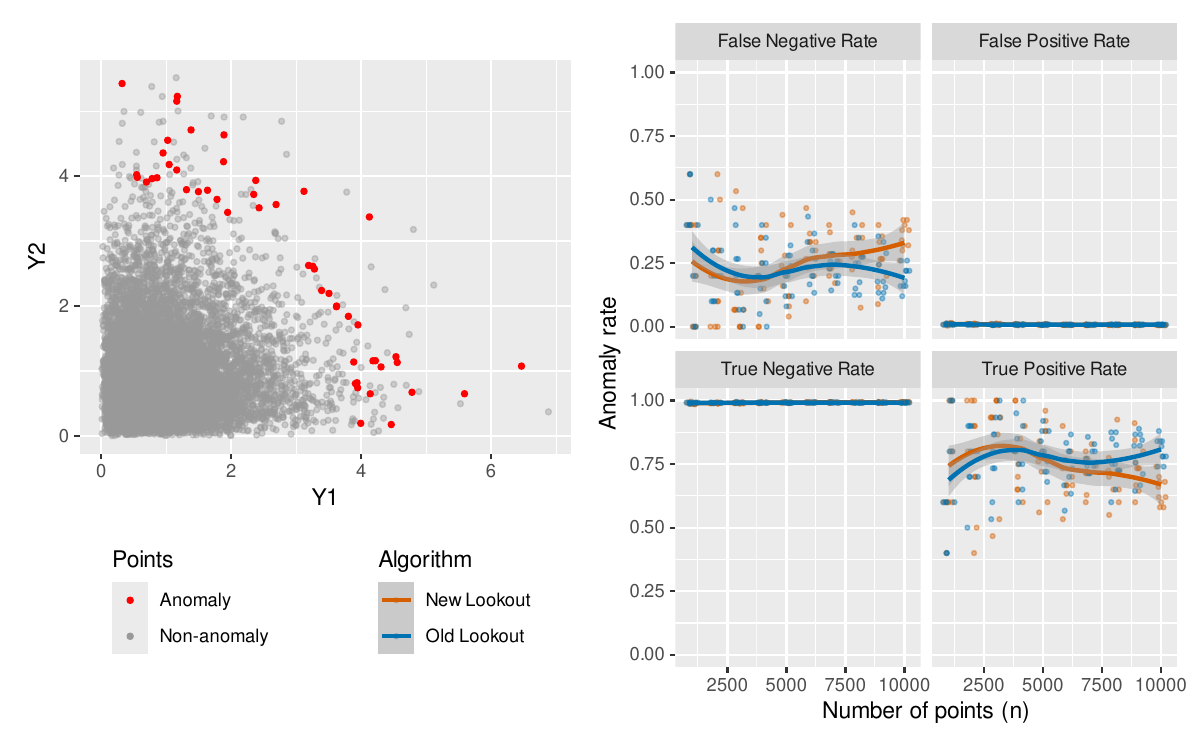}
  \caption{Experiment 4 data and results using Gamma distributions. Left: data for $n=10000$. Right: the relative performance of the two algorithms. The points in the right panel are slightly jittered horizontally to avoid overlapping points.}
  \label{fig:exp4GammaIncreasingN}
\end{figure}

\subsection{Showcase examples}\label{sec:realworld}

\begin{figure}[!p]
  \centering
  \includegraphics[width=0.84\linewidth]{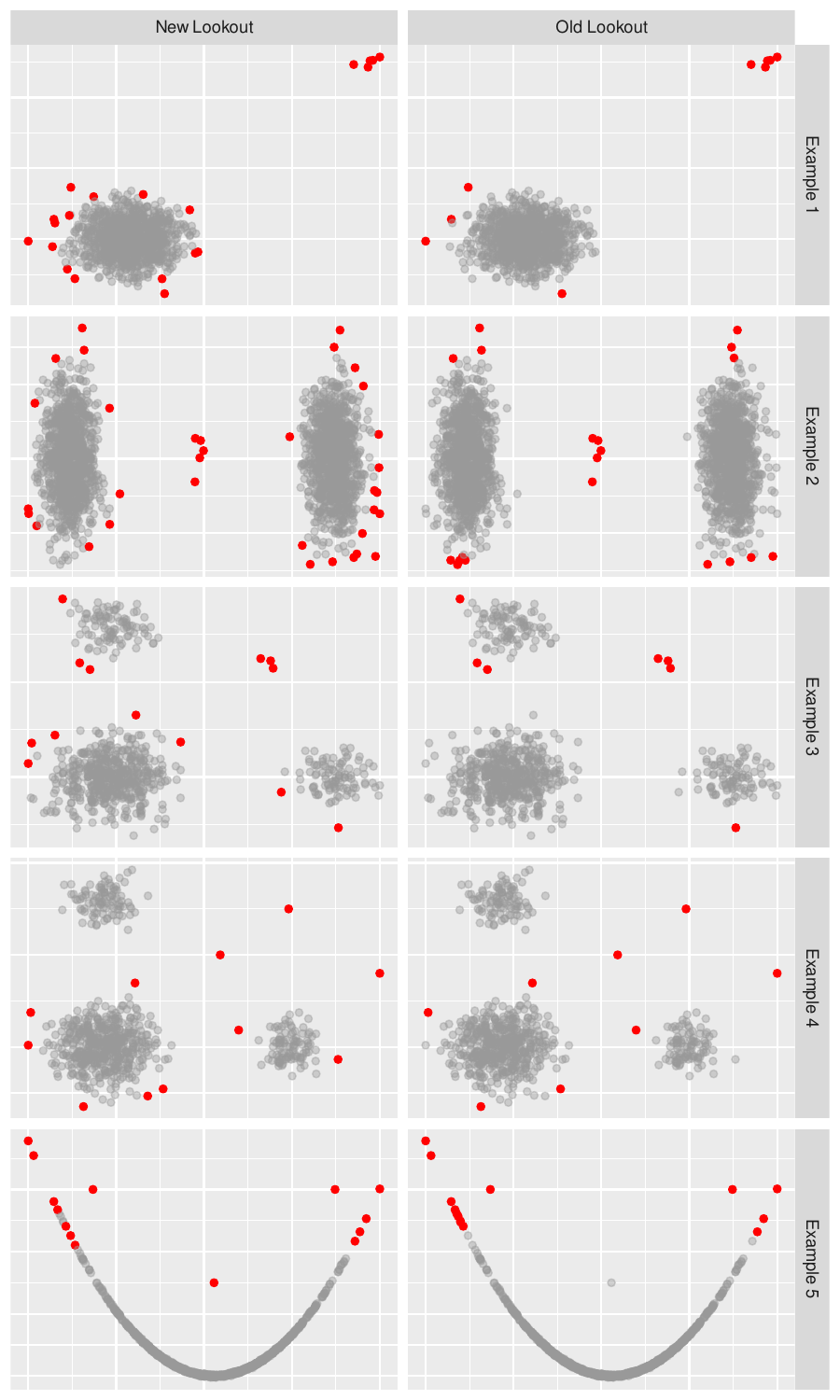}
  \caption{Comparing new lookout with the older version on some showcase examples}
  \label{fig:showcaseExamples}
\end{figure}

\cref{fig:showcaseExamples} shows some examples of new and old versions of lookout with identified anomalies in red. The purpose of this exercise is to check if the new version agrees with the old version on obvious anomalies. The graphs on the first row have anomalies at the top right-hand corner, which are identified by both algorithms; both algorithms find some false anomalies in the lower cloud of points, with the new version finding more. The second row depicts a bi-modal distribution with anomalies placed in the middle. Both versions of lookout identify the central anomalies, along with some false anomalies, with the new version finding more false anomalies. The third row has 3 clusters of normally distributed points with anomalies placed away from the three clusters. Here, the older version of the algorithm finds four false positives compared to ten with the new algorithm. The same setting is repeated in the fourth row with the anomalies spread out, and the two algorithms are similar. The last row has non-anomalous points in a parabola and the anomalous points outside of it. This is a tricky example as the points are on a manifold, and we are not considering their geometry. In this example, only the new algorithm finds all three anomalies lying off the manifold, while both algorithms producing several false positives on the manifold. Overall, on these examples, the new algorithm is a more sensitive to identifying anomalies, but also produces a few more false positives. The false anomalies identified by the new algorithm tend to be in regions of the data space with a low density of points, which is consistent with our definition of an anomaly.

\subsection{Real-world Examples}

We check the new and old versions of lookout on two sets of real data. Neither of these examples have labeled anomalies and so we cannot compute accuracy measures. However, as both datasets have a small number of variables, we can visualize the output. The datasets are
\begin{enumerate}
  \item Old faithful dataset\footnote{Data downloaded from \url{https://geysertimes.org}.}: eruptions of the Old Faithful Geyser in Yellowstone National Park, Wyoming, USA, from 14 January 2017 to 29 December 2023. For each recorded eruption the duration is given in seconds along with the waiting time to the following eruption. The records are incomplete, especially during winter.
  \item Wine reviews dataset\footnote{Data downloaded from \url{https://www.kaggle.com/datasets/zynicide/wine-reviews/data}.}: different attributes of wines from 44 countries, obtained from the \emph{Wine Enthusiast Magazine} during the week of 15 June 2017. We focus on the numerical points score given by the \emph{Wine Enthusiast} reviewer on a scale of 0--100 and the corresponding price of each bottle of wine.
\end{enumerate}

\begin{figure}[!htb]
  \centering
  \includegraphics[width=\linewidth]{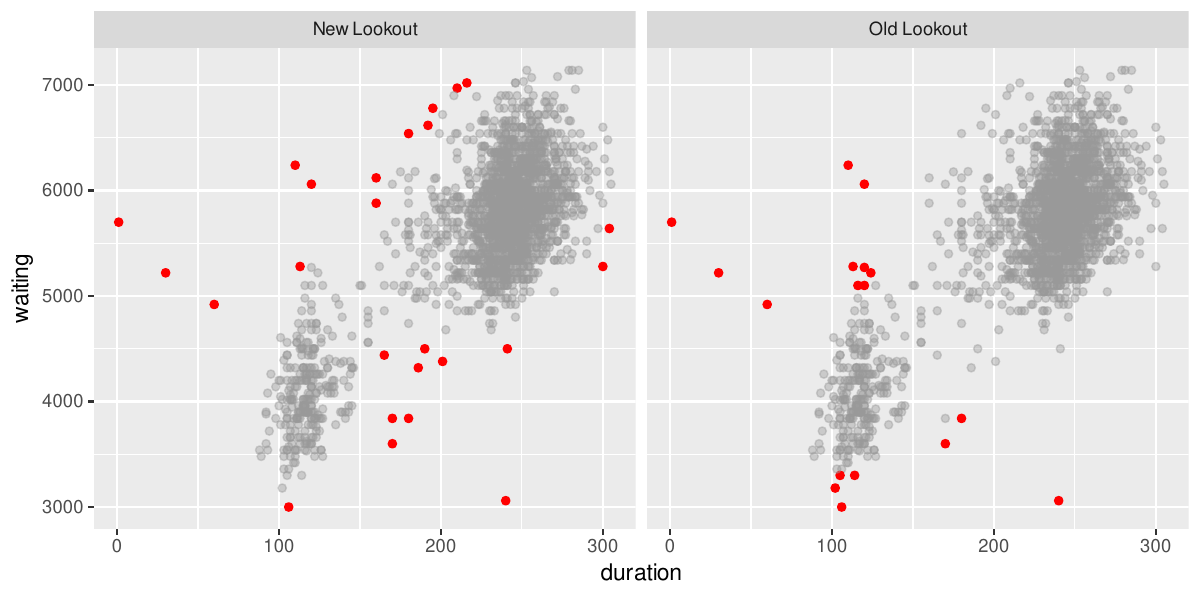}
  \caption{Anomalies in the Old Faithful dataset}
  \label{fig:oldfaithful}
\end{figure}

\begin{figure}[!htb]
  \centering
  \includegraphics[width=\linewidth]{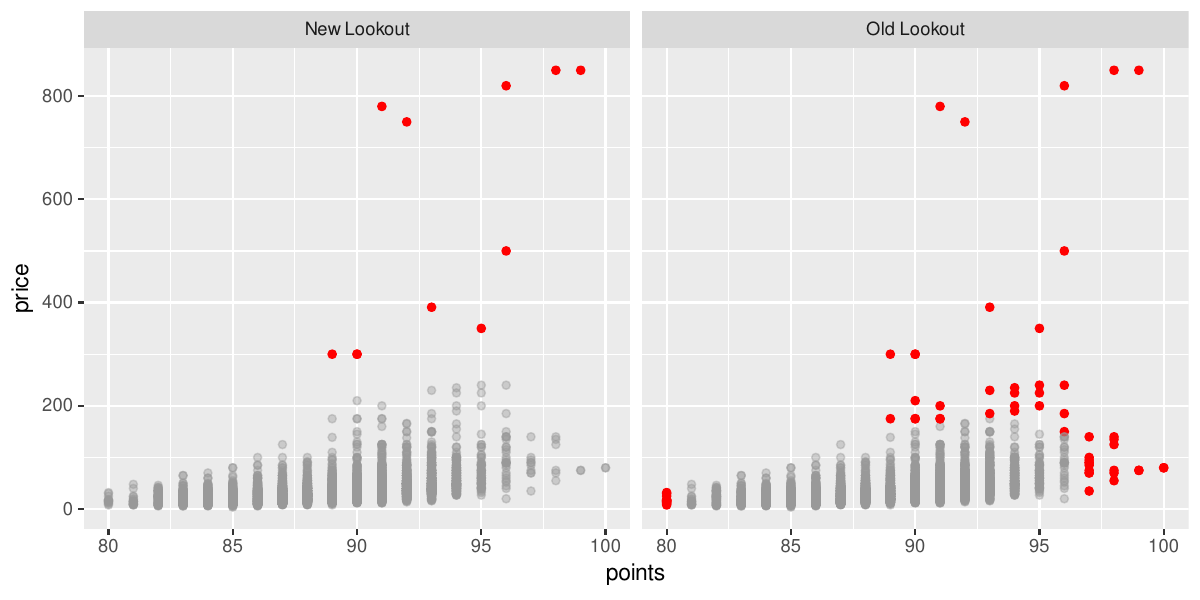}
  \caption{Anomalies in the Wine Reviews dataset. Left with new version of lookout and the right with the old version of lookout.}
  \label{fig:winereviews}
\end{figure}

\cref{fig:oldfaithful} shows the output of new and old versions of lookout when applied to the \textit{Old Faithful} dataset, with the identified anomalies depicted in red. New lookout identifies more ``anomalous'' points in regions of low density, while the old version tends to identify some points that are close to the main cloud of points as anomalous, and misses some points in the low density regions. In this example, the new lookout algorithm appears to be an improvement over the old.

\cref{fig:winereviews} shows the output of lookout on the \textit{Wine Reviews} dataset. For this dataset we see that the new version identifies far fewer anomalies compared the old version, especially at the extremes of the points score, and around the main cloud of points. In this example the new lookout algorithm is a substantial improvement over the old.

% ---------------------------------------------------------------------------
\section{Evaluating lookout with other anomaly detection methods}\label{sec:compareWithOtherMethods}
% ---------------------------------------------------------------------------

In this section we compare the new and old versions of lookout with two Extreme Value Theory (EVT) based methods and two KDE-based methods. The EVT-based methods are HDoutliers \citep{wilkinson2017visualizing} and \textit{stray} \citep{talagala2019anomaly}, both with their $\alpha$ parameter set to zero (which is the default for stray). The KDE-based methods are KDEOS \citep{Schubert2014} and RDOS \citep{Tang2017}, both run with their default settings. The same anomaly detection methods were used for comparison in the original version of lookout \citep{lookout}.

Both HDoutliers and stray identify anomalies, i.e., they give a binary label 1 to anomalies and 0 to non-anomalies. Therefore, to compare lookout with stray and HDoutliers we have used the metrics \textit{Fmeasure} and \textit{Gmean}. Suppose TP denotes the true positives (actual = 1, predicted = 1), FP denotes the false positives (actual = 0, predicted = 1), TN denotes the true negatives (actual = 0, predicted = 0) and FN denotes the false negatives (actual = 1, predicted = 0). Then these metrics are defined as follows:
\[
  \text{Fmeasure} = \frac{2(\text{Precision}\times\text{Recall})}{\text{Precision} + \text{Recall}}
  \quad \text{where} \quad
  \text{Precision} = \frac{\text{TP}}{\text{TP} + \text{FP}}
  \quad \text{and} \quad
  \text{Recall} = \frac{\text{TP}}{\text{TP} + \text{FN}} \, ,
\]
\[
  \text{Gmean} = \sqrt{\text{Sensitivity} \times \text{Specificity} }
  \quad \text{where} \quad \text{Sensitivity} = \frac{\text{TP}}{\text{TP} + \text{FN}}
  \quad \text{and} \quad \text{Specificity} = \frac{\text{TN}}{\text{TN} + \text{FP}} \, .
\]
Note that recall and sensitivity define the same quantity. Sensitivity is commonly used in epidemiological settings while recall is used in computer science.

The KDE-based methods KDEOS and RDOS give anomaly scores instead of identifying points as anomalous or not. A standard way to compare anomaly scores is by using the area under the Receiver Operator Curve (ROC). For a given set of anomaly scores the ROC plots 1 - Specificity on the $x$-axis for different cut-off points and Sensitivity on the $y$-axis. A method $A$ is better than method $B$ if the area under the ROC is greater for method $A$ compared to $B$. As lookout gives anomaly scores as well as binary labels, we compare lookout with KDEOS and RDOS using the area under the ROC, which we denote by AUC.

\subsection{Experiment 5: Normal distributions moving out in each iteration}

In this experiment we consider 405 points in $\R^6$, of which 5 points are anomalies. The experiment is run for 10 iterations with anomalies becoming more obvious in each iteration. The non-anomalies are distributed as $\mathcal{N}(0,1)$ in all dimensions. The anomalies differ the non-anomalies only in the first dimension. The anomalies are distributed as $\mathcal{N}\left(2 + (i-1)\times0.5, \, 0.2 \right)$ for iterations $i \in \{1, \ldots, 10\}$ in the first dimension. In the other dimensions the anomalies are distributed as $\mathcal{N}(0,1)$. We run the experiment for 10 iterations and in each iteration the anomalies move further away from the non-anomalous points. Each iteration is repeated 10 times to account for randomness.

\cref{fig:comparisonExp5} shows the data and the results of this experiment, with anomalies shown in red. The top left plot shows the data on the $Y_1Y_2$ plane for the 3rd iteration and the middle two plots show the data for the 9th iteration. We see that the anomalies have separated from the other points in the $Y_1$ direction, but mixed with the non-anomalies on the $Y_3Y_4$ plane.

\begin{figure}[!ht]
  \centering
  \includegraphics[width=\linewidth]{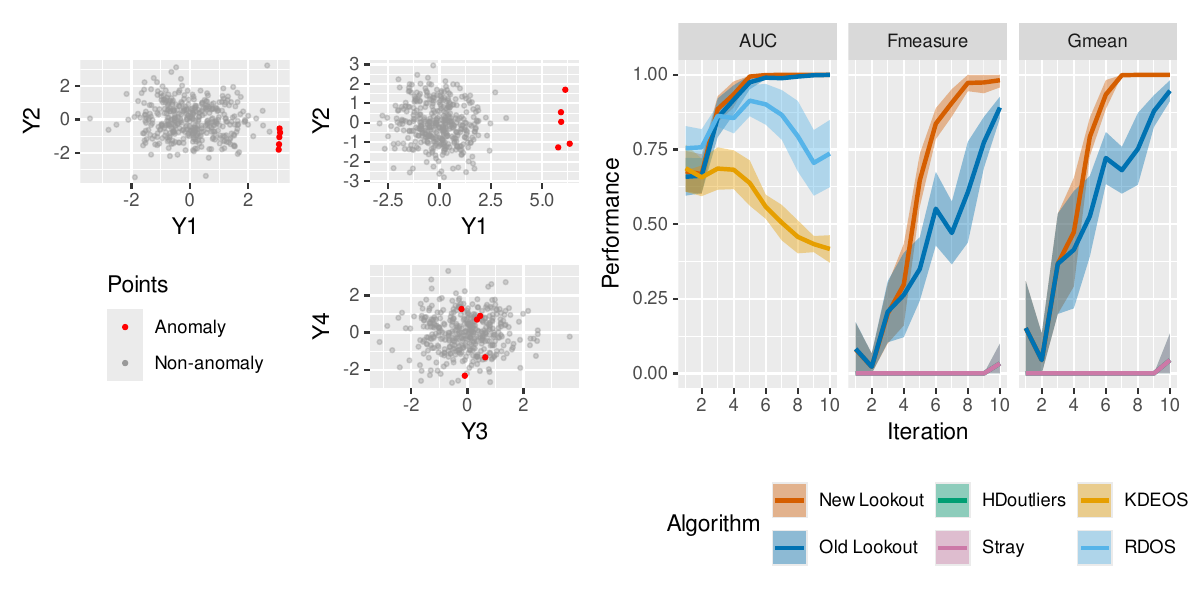}
  \caption{Experiment 5 results. Top left: Data from the 3rd iteration showing the $Y_1Y_2$ plane with anomalies in red. Middle left: Data from the 9th iteration showing both the $Y_1Y_2$ plane and $Y_3Y_4$ plane, with anomalies moving further away in the $Y_1$ direction, resulting in the anomalies and non-anomalies being mixed on the $Y_3Y_4$ plane. Right: AUC, Fmeasure and Gmean of old and new versions of lookout, HDoutliers, KDEOS, RDOS and stray.}
  \label{fig:comparisonExp5}
\end{figure}

The plot on the right gives the AUC when comparing lookout with KDEOS and RDOS and Fmeasure and Gmean for comparison with HDoutliers and stray. As discussed the reason for different metrics is because KDEOS and RDOS give anomaly scores, while stray and HDoutiers identify anomalies. We see that new lookout performs better than the old lookout in terms of Gmean and Fmeasure for higher iterations. In terms of AUC both versions of lookout give the same performance.

\subsection[Experiment 6: Annulus in R4]{Experiment 6: Annulus in $\R^4$}

For this experiment we consider 805 points in $\R^3$ of which 5 points are anomalies. \cref{fig:comparisonExp6} shows the data lying in an annulus on the $Y_1Y_2$ plane with $Y_3 \sim \mathcal{U}(0,1)$. As before, the experiment is run for 10 iterations with anomalies moving towards the centre of the hole in each iteration. Plots from iteration 3 and 9 are shown on the left in \cref{fig:comparisonExp6}. The results of the different methods are shown on the right.

\begin{figure}[!ht]
  \centering
  \includegraphics[width=\linewidth]{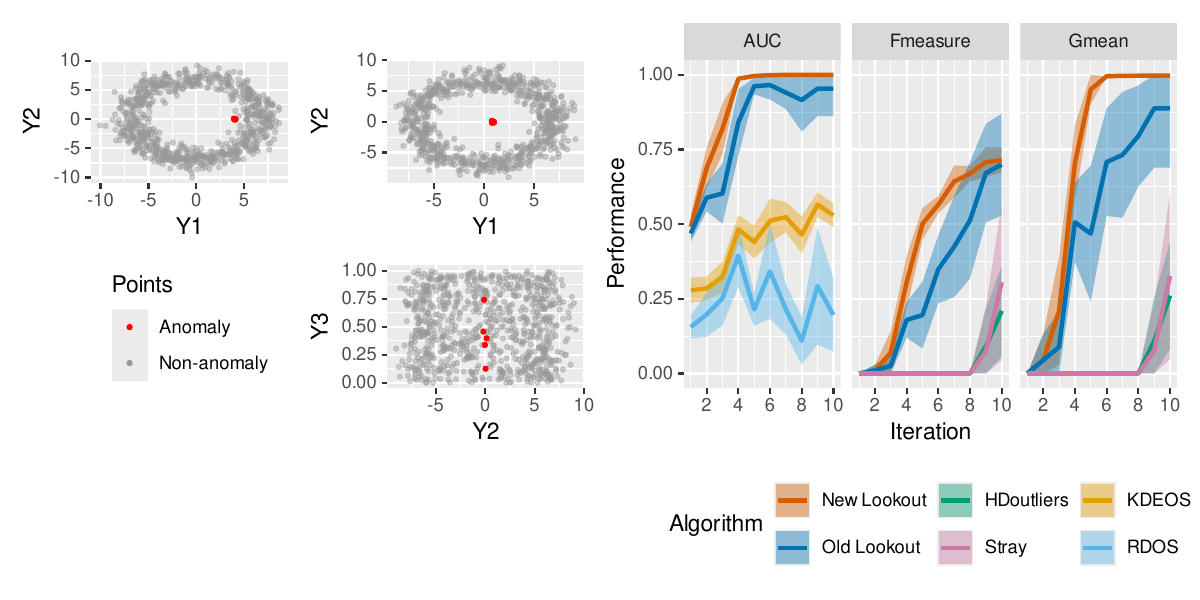}
  \caption{Experiment 6 results. Top left: Data from the 3rd iteration showing the $Y_1Y_2$ plane with anomalies in red. Middle left: Data from the 9th iteration showing both the $Y_1Y_2$ plane and $Y_2Y_3$ plane, with anomalies moving into the annulus. Right: AUC, Fmeasure and Gmean of old and new versions of lookout, HDoutliers, KDEOS, RDOS and stray.}
  \label{fig:comparisonExp6}
\end{figure}

\subsection[Experiment 7: Unit cube in R20]{Experiment 7: Unit cube in $\R^{20}$}

This experiment considers 500 points in $\R^{20}$ of which one is an anomaly. At the start of the experiment all 500 points are uniformly distributed with $y_i \sim \mathcal{U}(0,1)$ in each dimension. For this experiment we consider 20 iterations and in each iteration the 500th data point becomes more anomalous. In the first iteration the 500th point is updated with the first coordinate $Y_1 = 0.9$. In the second iteration it is further updated with $Y_2 = 0.9$, resulting in $(0.9, 0.9, Y_3, \dots, Y_{20})$. In the $i$th iteration the 500th point is updated with $Y_i = 0.9$. Each iteration is repeated 10 times to account for randomness.

The anomalous point is not anomalous in any single dimension or on any $Y_iY_j$ plane. In fact even in the space of the first two principal components, the anomaly is not distinguishable. We use \textit{dobin} \citep{dobinPaper}, a dimension reduction method especially developed for anomalies to visualize this space. \cref{fig:comparisonExp7} shows the data and the results of this experiment. The plots on the left show the data on the $D_1D_2$ plane for iterations 5, 12 and 20, where $D_1$ and $D_2$ are the first two axes of dobin projection. We see that in the 20th iteration the anomaly can be easily identified on the $D_1D_2$ plane. The plot on the right shows the results of applying the various algorithms to this problem. The lookout algorithms are applied without scaling in this experiment.

\begin{figure}[!ht]
  \centering
  \includegraphics[width=\linewidth]{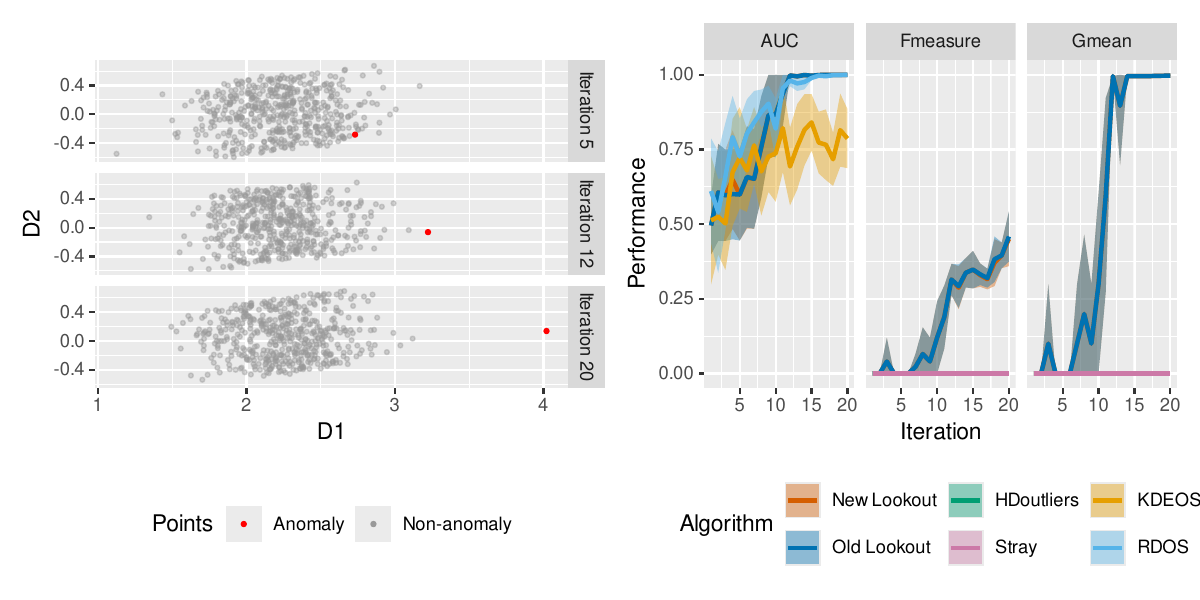}
  \caption{Experiment 7 results. Left: Data from the 5th, 12th and 20th iterations, showing the $D_1D_2$ plane from the dobin projection. Anomalies are shown in red. Right: AUC, Fmeasure and Gmean of old and new versions of lookout, HDoutliers, KDEOS, RDOS and stray.}
  \label{fig:comparisonExp7}
\end{figure}

% ---------------------------------------------------------------------------
\section{Discussion and conclusions }\label{sec:conclusion}
% ---------------------------------------------------------------------------

We have proposed an improved version of the lookout algorithm based on theoretical underpinnings in persistent homology. The updated algorithm uses an upper quantile of the Rips death diameters rather than the maximum successive difference for bandwidth selection, which we prove yields a consistent kernel density estimator under broad conditions. Additionally, we replace min-max scaling with robust multivariate scaling using the orthogonalized Gnanadesikan-Kettenring estimator, improving robustness to outliers and accounting for correlations between variables. Finally, we constrain the shape parameter of the Generalized Pareto Distribution to be non-positive when fitting to the negative log kernel density estimates, leveraging known bounds on these values to enhance stability.

Our empirical evaluation demonstrates that the updated lookout performs better than the original version across diverse examples, while maintaining competitive performance against other anomaly detection methods. The algorithm provides probabilistic anomaly scores with theoretical grounding in extreme value theory, enabling principled threshold selection.

The key strengths of the updated lookout algorithm include: (1) theoretical consistency guarantees for the kernel density estimator under mild conditions; (2) robustness to outliers in both the scaling and bandwidth selection steps; (3) principled probabilistic interpretation through extreme value theory; and (4) automatic adaptation to the local topological structure of the data through persistent homology.

The algorithm is versatile and suited to any continuous multivariate data, and can handle moderate dimensionality effectively (the experiments show good results up to about 20 dimensions) effectively. It can also be applied to any problems that can be transformed into a continuous multivariate setting, such as image anomaly detection via feature extraction, or functional data anomaly detection via functional principal components. We also assume that the data are independent and identically distributed, making it unsuitable for temporal or spatial data without pre-processing (e.g., applying it to the residuals of a fitted model).

However, several computational limitations remain. The computational complexity is dominated by the computation of dimension zero Vietoris-Rips persistent homology, which is equivalent to single link clustering, requiring $O(n\log n)$ time \citep{Gower1969}. This can be prohibitive for very large datasets, although approximate methods or subsampling could mitigate this. The method may also struggle with very high-dimensional data due to the curse of dimensionality affecting kernel density estimation. Using dimension reduction methods, such as principal component analysis or dobin \citep{dobin}, before applying lookout could help in such cases.

\backmatter

\bmhead{Supplementary information}

The updated algorithm is implemented in the \texttt{lookout} R package \citep{Rlookout}, available on CRAN at \url{https://CRAN.R-project.org/package=lookout}.

Files to reproduce all numerical examples are available at \url{https://github.com/robjhyndman/Look-Out-2}

\bmhead{Acknowledgements}

Rob Hyndman and Sevvandi Kandanaarachchi are part of the Australian Research Council Industrial Transformation Training Centre in Optimisation Technologies, Integrated Methodologies, and Applications (OPTIMA), Project ID IC200100009. KT is supported by an Australian Research Council Future Fellowship, FT250100951.

\bmhead{Conflict of interest}
Katharine Turner is a member of the APCT Editorial Board. She was not involved in the peer review or handling of this manuscript. No other conflicts of interest are declared.

\bibliography{References}

\end{document}